\title{Detecting an induced net subdivision}
\author{Maria Chudnovsky\thanks{Partially supported by
    NSF grants DMS-0758364 and DMS-1001091.}\\
Columbia University, New York, NY 10027
\\
\\
Paul Seymour\thanks{Partially supported by ONR grant N00014-01-1-0608 and NSF grant DMS-0070912.}\\
Princeton University, Princeton, NJ 08544
\\
\\
Nicolas Trotignon\thanks{Partially supported by the French \emph{Agence Nationale de la
Recherche} under reference \textsc{anr-10-jcjc-Heredia.}}\\
CNRS, LIP -- ENS Lyon\\15 parvis Ren\'e Descartes, BP 7000, 69342 Lyon cedex 07, France}
\date{June 20, 2011; revised July 12, 2013}
\def\d{\hbox{-}}
\def\c{\hbox{-}\cdots\hbox{-}}
\newtheorem{theorem}{}[section] 
\newtheorem{lemma}[theorem]{}
\def\l{,\ldots,}
\newcommand{\sm}{\setminus} 
 \renewenvironment{proof}[1][]%
 {\noindent {\setcounter{claim}{0}\it Proof.
    }{#1}{}}{\hfill$\Box$\vspace{2ex}} 
\newcounter{claim}
\begin{document}
\maketitle 

\begin{abstract}
A {\em net} is a graph consisting of a triangle $C$ and three more vertices, each of degree one and with 
its neighbour in $C$, and all adjacent to different vertices of $C$. We give a polynomial-time algorithm
to test whether an input graph has an induced subgraph which is a subdivision of a net. Unlike many similar questions, this does not
seem to be solvable by an application of the ``three-in-a-tree'' subroutine.
\end{abstract}

\section{Introduction} 

In this paper, all graphs are simple and finite.  
Let $H$ be the graph with six vertices $a_1, a_2, a_3, b_1, b_2 , b_3$
and the following edges: $a_1a_2, a_2a_3, a_3a_1, b_1a_1, b_2a_2,
b_3a_3$.  This is called a {\em net}. A \emph{doily} is a graph consisting of a cycle $H$ and three more vertices $b_1,b_2,b_3$, pairwise non-adjacent,
such that each $b_i$ has a unique neighbour $a_i$ in $V(H)$, and $a_1,a_2,a_3$ are all different.
Thus some induced subgraph of a graph $G$ is a doily if and only if some induced subgraph of $G$ is a subdivision of a net.
A \emph{doily
  of a graph $G$} is an induced subgraph of $G$ that is a doily.
We say a graph \emph{contains} a doily if some induced subgraph is a doily.

In this paper, we give a polynomial-time algorithm to test whether an input graph $G$ contains
a doily. Before we go on, let us motivate this a little. For any fixed graph $H$, we can test
if an input graph $G$ contains $H$ as a subgraph, or as an induced subgraph, in time $O(n^{|V(H)|})$, just by checking all
sets of vertices of $G$ of cardinality $|V(H)|$. (When we give the running time
of an algorithm whose input is a graph $G$, $n$ stands for the number
of vertices of $G$.) And these count as polynomial-time algorithms, if $H$ is a fixed graph.
One can also check in polynomial time (again, with $H$ fixed) whether some subgraph of $G$ is a subdivision of $H$,
as a consequence of the results of the Graph Minors series~\cite{RS13}.

On the other hand, checking whether $G$ contains an {\em induced} subgraph isomorphic to a subdivision of $H$
is much more complicated. Let us call this the ``induced $H$ subdivision problem''. 
For some graphs $H$ this can be solved in polynomial time, and for some it is NP-complete,
and we are far from identifying the border between the two. For instance, the following seem to be open:
\begin{itemize}
\item Can it be solved in polynomial time for every graph $H$ with maximum degree at most three?
\item Can it be solved in polynomial time when $H$ is $K_4$?
\item Can it be solved in polynomial time when $H$ consists of two disjoint triangles?
\end{itemize}

Here are some results: 

\begin{theorem}\label{NPcompletecases}
The induced $H$ subdivision problem is NP-complete when $H$ is either
\begin{itemize}
\item the graph obtained from the complete bipartite graph $K_{2,3}$ by adding an edge joining the two vertices of degree three, or
\item the graph with seven vertices $1\l 7$ and edges $12,13,23,14,15,26,27$, or
\item the tree obtained by adding nine leaves to a three-vertex path $P$, three adjacent to each vertex of $P$.
\end{itemize}
It is polynomial-time solvable when $H$ is either
\begin{itemize}
\item the complete bipartite graph $K_{2,3}$, or
\item the graph with six vertices $1\l 6$ and edges $12,13,23,14,15,26$, or
\item a tree $H$ that can be obtained as follows: let $T_1$ be a tree with at most four vertices, let $T_2$ be obtained from $T_1$
by adding arbitrarily many leaves each adjacent to some leaf of $T_1$, and let $H$ be a subdivision of $T_2$.
\end{itemize}
\end{theorem}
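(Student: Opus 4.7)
The theorem compiles six essentially independent results about specific graphs $H$—three asserting NP-completeness of the induced $H$ subdivision problem and three asserting polynomial-time solvability—and I would prove each item separately. Several of these likely appear (in some form) in the prior literature, so a substantial portion of the work consists of adapting or citing existing reductions and algorithms, rather than inventing them from scratch.

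For the NP-completeness items, the standard tactic is a reduction from 3-SAT (or NAE-3-SAT, or hypergraph $2$-colouring), building variable and clause gadgets so that the constructed graph contains an induced subdivision of $H$ iff the instance is satisfiable. The precise gadgets depend on $H$, but the general idea is: any induced subdivision has a set of ``branch vertices'' of non-subdivided degree, whose possible locations in the constructed graph are pinned down by the gadgets; the various options for the subdivided paths between branch vertices then encode the truth assignment. For $K_{2,3}$ plus an edge joining the two degree-three vertices, these two vertices must be unsubdivided branch vertices, so the reduction revolves around forcing three internally disjoint induced paths between them. For the 7-vertex graph (triangle $\{1,2,3\}$ with two pendants at vertex $1$ and two at vertex $2$) the triangle is essentially rigid, and the four pendant legs provide the natural place to encode variable/clause states. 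For the triple spider (a three-vertex path with three leaves attached at each vertex) the branch vertices are the three inner vertices, and one pushes satisfiability through these.

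For the polynomial-time items, the $K_{2,3}$ case is the classical induced theta detection problem, which is solved by the three-in-a-tree subroutine of Chudnovsky and Seymour. The 6-vertex graph (triangle with two pendants at vertex $1$ and one pendant at vertex $2$) has only a bounded number of branch-vertex positions, so I would enumerate all $O(n^6)$ candidate branch sets and then reduce the existence of the required internally disjoint induced paths between them to a constant number of three-in-a-tree calls in appropriately cleaned subgraphs. For the tree item, since $H$ is a subdivision of $T_2$ and $T_2$ is obtained by adding leaves to a tree $T_1$ on at most four vertices, the idea is to guess in polynomial time ($O(n^4)$ choices) the images of $V(T_1)$; for each guess, the many additional leaves of $T_2$ need only be realised by private neighbours of the appropriate images, and the subdivisions of edges of $T_2$ reduce to induced shortest-path queries in carefully chosen subgraphs, both of which are tractable.

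The main obstacle, in my view, is the NP-completeness for the 7-vertex graph: the rigid triangle plus only four pendant slots leaves the designer of the gadgets very little room, and the difficulty is not producing an induced subdivision when the instance is satisfiable, but rather ruling out spurious induced subdivisions (with branch vertices placed in unintended positions, or with pendant paths taking shortcuts through the clause gadgets) when it is not. Getting this last property right is where essentially all the technical work lives.
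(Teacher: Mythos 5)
The paper does not actually prove Theorem~\ref{NPcompletecases}: it is a survey of prior results, with the three NP-completeness statements attributed to a single result of L\'ev\^eque, Lin, Maffray and Trotignon~\cite{leveque.lmt:detect}, and all three polynomial statements attributed to applications of the three-in-a-tree algorithm of~\cite{chudnovsky.seymour:theta}. So there is no in-paper argument to compare you against; the most one can do is check whether your sketch is consistent with what the cited approach would look like.

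For the NP-completeness items, your high-level plan (3-SAT-style reductions, with gadgets pinning down the branch vertices and paths encoding truth values) matches the flavour of the hardness proofs in~\cite{leveque.lmt:detect}, though you have not exhibited gadgets or ruled out spurious subdivisions, which you yourself flag as the crux. For the first two polynomial items, you correctly identify three-in-a-tree as the engine, and the ``guess branch vertices, then call three-in-a-tree'' template is indeed how those cases go.

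Where your proposal has a genuine gap is the third polynomial item, the trees. You say the subdivided edges of $T_2$ ``reduce to induced shortest-path queries in carefully chosen subgraphs,'' but shortest paths by themselves do not produce an \emph{induced} subdivision: distinct subdivided paths, and the pendant leaves, may acquire unwanted adjacencies to one another, and ``carefully chosen subgraphs'' does not by itself prevent this. The whole reason three-in-a-tree exists is precisely to solve the non-trivial combinatorial problem of building a large induced tree through prescribed vertices, and the paper explicitly states that this case too is solved by an application of three-in-a-tree, not by shortest paths. (Even in the present paper's own use of shortest paths for the net, Section~\ref{sec:spd} only works after the non-trivial star-cutset cleaning of Section~\ref{s:cleaning} has eliminated all major vertices; shortest paths are never used raw.) To repair your sketch you would need either to invoke three-in-a-tree after guessing the images of $V(T_1)$ and suitably restricting the graph, or to supply the cleaning step that makes shortest paths safe.
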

The first three results are consequences of a result of~\cite{leveque.lmt:detect}. 
The last three are all by application of an algorithm of~\cite{chudnovsky.seymour:theta}, that we call the ``three-in-a-tree'' algorithm; given
a graph $G$ and three vertices of $G$, it tests if there is a subset of $V(G)$ inducing a tree that contains the three vertices.
The three-in-a-tree algorithm was given several more applications in~\cite{leveque.lmt:detect}. Indeed, to date all the non-trivial polynomial-time
instances of the induced $H$ subdivision problem were solved by application of the three-in-a-tree algorithm. The result of this
paper is {\em not} such an application. Our main result is that when $H$ is a net, the induced $H$ subdivision problem is polynomial-time solvable. 
More explicitly:

\begin{theorem}
  \label{th:main}
  There is an $O(n^{16})$-time algorithm whose input is a graph $G$ and
  whose output is a doily of $G$ if such a doily exists.
\end{theorem}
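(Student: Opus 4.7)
The plan is to enumerate a bounded-size ``frame'' of vertices that pins down the local structure of a prospective doily, and then to extend the frame to a full doily by finding shortest induced paths in appropriately cleaned subgraphs.

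First we guess the six key vertices $a_1,a_2,a_3,b_1,b_2,b_3$ of a prospective doily; this gives $O(n^6)$ candidates. We reject any choice where the adjacencies among these six vertices are inconsistent with being part of a doily, and we return a doily directly when the $a_i$ already form a triangle. The remaining task is to find three internally disjoint induced paths $P_{12},P_{23},P_{31}$ joining the $a_i$, whose union is an induced cycle of $G$ and none of whose internal vertices is adjacent to any $b_k$.

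The main difficulty is that the three paths must \emph{jointly} form an induced cycle, so chords between different paths, as well as chords within a single path, must be forbidden. Our plan to handle this is to enumerate further ``anchor'' vertices along each path: for each $a_i$, its two cycle neighbours on the two doily-paths incident with $a_i$, together with a few additional anchors chosen to block potential shortcuts between the three paths. Roughly ten anchors in total bring the number of enumerated vertices to sixteen, accounting for the $O(n^{16})$ running time. For each such extended frame we clean the graph by deleting any vertex adjacent to some $b_k$ other than $a_k$, any vertex adjacent to a forbidden combination of anchors, and any vertex whose presence would force an unavoidable chord between two prospective paths. On the cleaned graph each path is then completed between successive anchors by a shortest-induced-path computation, which is polynomial since shortest paths between a fixed pair of vertices are automatically induced.

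The heart of the proof, and the main obstacle, is a structural lemma of the following shape: \emph{whenever $G$ contains a doily, some enumeration of the frame survives the cleaning, and the shortest-induced-path extensions between consecutive anchors yield three paths whose union is a valid doily}. Such a lemma is typically established by picking a doily of $G$ that minimises some potential (for instance the total number of vertices) and showing that the minimality rules out any residual chord or forbidden $b_k$-neighbour that a shortest-induced-path search might unintentionally create. The algorithm is then simply a brute-force enumeration of frames, filtered by the conditions imposed by the lemma; essentially all the work lies in choosing the right anchors and proving the canonical-form lemma that justifies the cleaning.
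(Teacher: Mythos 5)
Your high-level plan matches the paper's: enumerate a bounded-size frame, clean the graph, and complete the doily using shortest paths. But the substantive content of the argument is missing at exactly the point where the paper has to do real work, and the cleaning step as you describe it would not function as a proof.

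The concrete gap is the cleaning rule. You propose to delete ``any vertex whose presence would force an unavoidable chord between two prospective paths,'' but this is not a well-defined local condition you can test and iterate, and no minimality-of-doily argument alone will justify it. The paper's actual mechanism is quite specific and subtle: it fixes the frame to be a twelve-tuple $(b_1,b_2,b_3,a_1,a_2,a_3,a_1',a_2',a_3',a_1'',a_2'',a_3'')$ and repeatedly deletes any vertex $v$ that is the centre of a \emph{star cutset} separating some $\{a_j\}$ from $\{a_1,a_2,a_3\}\setminus\{a_j\}$. This is the right rule because one can prove that a minimum doily never has such a vertex on it (so the frame survives), and after no such vertex remains the key structural lemma holds: with respect to any minimum doily $K$ carrying the frame, every vertex outside $K$ is \emph{minor} (its neighbourhood in $K$ is confined to a length-$\le 2$ subpath of one of the three $a_i$-to-$a_j$ paths). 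Proving that a major vertex would be a forbidden star-cutset centre is where essentially all the work in Section~3 goes; your proposal treats this as a routine minimality argument, but it is the technical heart of the paper. Without identifying the star-cutset rule and the major/minor dichotomy, the shortest-path completion step cannot be justified: the paper needs the no-major-vertex condition both to show that the shortest $a_{i-1}$-$a_{i+1}$ path in $G\setminus a_i$ has the right length and to show that replacing $P_i$ by it introduces no bad attachments. There is also a small accounting discrepancy: the paper enumerates only twelve anchor vertices ($O(n^{12})$ frames), and the $O(n^{16})$ comes from multiplying by the per-frame cleaning cost, whereas you propose to enumerate sixteen anchors outright; this happens to match the exponent but for the wrong reason, and sixteen anchors by themselves would not give you the needed structural control.
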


In fact, one can easily modify the algorithm we present so that it outputs a doily of $G$ with the minimum number of vertices,
when one exists, but we omit those details.

\section{Outline of the algorithm}
\label{sec:outline}

A \emph{frame} in a graph $G$ is a twelve-tuple 
$$(b_1, b_2, b_3,
a_1, a_2, a_3, a'_1, a_2', a_3', a_1'', a_2'', a_3'')$$
of vertices from $V(G)$ such that 
\begin{itemize}
\item $a_1$,
$a_2$, $a_3$, $b_1$, $b_2$, and  $b_3$ are distinct, 
\item for $i = 1,2,3$, $b_i$ has degree one and $a_ib_i\in E(G)$, and
\item for $i = 1,2,3$, $a_i$ has degree three and its neighbours are $b_i,a_i',a_i''$.
\end{itemize}

Let $K$ be a doily of some graph $G$.  (In what follows, all computations with indices are
modulo three). A frame 
$$(b_1, b_2, b_3,
a_1, a_2, a_3, a'_1, a_2', a_3', a_1'', a_2'', a_3'')$$
in $G$
is a \emph{frame for $K$} if:

\begin{itemize}
\item $b_1,b_2,b_3\in V(K)$, and therefore $a_1, a_2, a_3, a'_1, a_2', a_3', a_1'', a_2'', a_3'' \in V(K)$
\item for $i = 1,2,3$, $a''_{i+1}$ and $a'_{i-1}$ belong to the path of $K$ between $a_{i-1}$ and $a_{i+1}$ not containing $a_i$.
\end{itemize}

If $v$ is a vertex of a graph $G$, $N(v)$ denotes the set of neighbours of $v$ in $G$.
Let $G$ be a graph and $X, Y \subseteq V(G)$ be disjoint non-empty
sets.  A vertex $v\in V(G)\sm (X\cup Y)$ is \emph{the centre of a star
  cutset that separates $X$ from $Y$} if for some $S\subseteq N(v) \sm
(X\cup Y)$, the graph $G\sm (\{v\} \cup S)$ contains no path from $X$
to $Y$ (in particular, $G\sm (\{v\} \cup S)$ is disconnected).

A doily of a graph $G$ is \emph{minimum} if its number of vertices is
minimum over all doilies of $G$. A pair $(G, F)$, where $G$ is a
graph and $F = (b_1, b_2, b_3, a_1,a_2,a_3 \dots)$ is a frame in
$G$, is \emph{trackable} if the following hold:

\begin{itemize}
\item every doily in $G$ has at least nine vertices;
\item $G$ contains a doily if and only if $F$ is a frame for some minimum
  doily of~$G$;
\item no vertex of $G$ is the centre of a star cutset that, for some $i \in \{1,2,3\}$, separates $\{a_i\}$
from $\{a_1,a_2,a_3\}\setminus \{a_i\}$.
\end{itemize}

The following is easy to prove with brute force enumeration.

\begin{lemma}
  \label{l:cleaning}
  There is an $O(n^{16})$-time algorithm whose input is a graph $G$,
  and whose output is a doily of $G$ with at most eight vertices if 
there is one, and otherwise $k \leq n^{12}$ pairs $(G_1, F_1), \dots, (G_k,
  F_k)$ such that:
  \begin{itemize}
  \item for all $i=1, \dots, k$, $G_i$ is an induced subgraph of $G$
    and $F_i$ is a frame in $G_i$;
  \item $G$ contains a doily if and only
    if for some $i\in \{1, \dots, k\}$, $(G_i, F_i)$ is trackable and
    $G_i$ contains a doily.
  \end{itemize}
\end{lemma}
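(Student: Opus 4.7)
The plan is to split the algorithm into two phases: a brute-force search for small doilies, then an enumeration over candidate twelve-tuples. First I would enumerate all subsets $S\subseteq V(G)$ with $|S|\le 8$ and test whether $G[S]$ is a doily; this takes $O(n^8)$ time. If such an $S$ exists, output $G[S]$ and halt. Otherwise, every doily of $G$ has at least nine vertices, giving the first trackability condition for free in everything that follows.

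In the main phase, I would loop over all twelve-tuples
$F=(b_1,b_2,b_3,a_1,a_2,a_3,a'_1,a'_2,a'_3,a''_1,a''_2,a''_3)$
of vertices of $G$ (at most $n^{12}$ of them) and, for each $F$, construct a pair $(G_F,F)$ by two stages of cleaning. Stage one removes every vertex $v\in V(G)$ that violates a frame condition (i.e.\ $v\in N(b_i)\sm\{a_i\}$ for some $i$, or $v\in N(a_i)\sm\{b_i,a'_i,a''_i\}$ for some $i$); if this deletes one of the twelve vertices, or if $F$ still fails to be a frame, skip $F$. Stage two iteratively deletes any vertex $v\notin\{a_1,a_2,a_3\}$ that is the centre of a star cutset separating some $\{a_i\}$ from $\{a_1,a_2,a_3\}\sm\{a_i\}$, until none remains; each check takes polynomial time and there are at most $n$ iterations. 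Output $(G_F,F)$, where $G_F$ is the surviving induced subgraph. Stage two never touches $a_1,a_2,a_3$ (excluded by definition), nor any $b_j$ (leaves cannot be cutset centres, since removing a leaf disconnects nothing), and the argument below shows it also spares $a'_i,a''_i$, so $F$ remains a frame throughout.

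The backward direction of the iff is trivial since each $G_i$ is an induced subgraph of $G$. For the forward direction, let $K$ be a minimum doily of $G$ and let $F^*$ be its natural twelve-tuple. Because $K$ is induced, stage one for $F^*$ deletes only vertices outside $V(K)$, so $K$ survives. The main obstacle is to show that stage two also spares $V(K)$: equivalently, that no vertex of $V(K)\sm\{a_1,a_2,a_3\}$ can be a star-cutset centre separating the $a_i$'s in any induced subgraph that still contains $K$. The $b_j$'s are already ruled out. For a vertex $v$ on the cycle arc of $K$ between $a_i$ and $a_j$, the other two cycle arcs avoid $v$, lie in $V(K)$, and, since $K$ is induced, meet $N(v)$ at most at the endpoints $a_i,a_j$, which are excluded from $S$. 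Hence no choice of $S\subseteq N(v)\sm\{a_1,a_2,a_3\}$ cuts the path in $V(K)$ from any $a_l$ to some $a_m$ $(m\neq l)$ that uses only those two arcs, so $v$ cannot be such a centre. Consequently $K$ remains an induced doily of $G_{F^*}$, and the pair $(G_{F^*},F^*)$ satisfies all three trackability conditions (condition (i) from Phase~1, condition (ii) because $F^*$ is a frame for the minimum doily $K$ of $G_{F^*}$, and condition (iii) by construction), giving the forward direction. The overall running time is $O(n^8)$ for Phase~1 plus $O(n^{12})\cdot O(n^4)=O(n^{16})$ for Phase~2.
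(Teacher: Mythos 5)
Your proposal is correct and follows essentially the same approach as the paper: brute-force for small doilies, enumerate twelve-tuples, delete vertices that violate the degree conditions, then iteratively delete star-cutset centres, and argue that for the twelve-tuple coming from a minimum doily $K$ no vertex of $K$ is ever a star-cutset centre so the pair survives. Your write-up spells out in a bit more detail (correctly) why vertices of $K$ on the cycle arcs cannot be cutset centres, a point the paper states without elaboration.
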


\begin{proof}
  First, check all subsets of cardinality at most eight from $V(G)$, and stop if one of
  them induces a doily.  This takes time $O(n^8)$.  Now, generate all
  twelve-tuples $F_1, \dots, F_k$ from $V(G)^{12}$.  For each $F_i$ in 
turn, let
$$F_i = (a_1, a_2,
  a_3, b_1, b_2, b_3,  a'_1, a''_1, a'_2, a''_2, a'_3, a''_3),$$
and let $X$ be the set
$$\{a_1, a_2,
  a_3, b_1, b_2, b_3,  a'_1, a''_1, a'_2, a''_2, a'_3, a''_3\}.$$
First we check whether
\begin{itemize}
\item $b_1,b_2,b_3$ are distinct, and different from all of $a_1, a_2,
  a_3, a'_1, a''_1, a'_2, a''_2, a'_3, a''_3$
\item for $i = 1,2,3$, $b_i$ is adjacent to $a_i$, and has no other neighbour in $X$
\item $a_1, a_2, a_3$ are distinct, 
\item for $i = 1,2,3$, $a_i', a_i''$ are distinct neighbours of $a_i$, and $a_i$ has no neighbours
in $X$ except $b_i,a_i', a_i''$
\item for $i = 1,2,3$, if $a_{i-1}, a_{i+1}$ are adjacent then $a_{i-1}' = a_{i+1}$ and $a_{i+1}'' = a_{i-1}$,
and if $a_{i-1}, a_{i+1}$ are non-adjacent then $a_{i-1}',a_{i+1}'' $ are different from all of
$$a_{i-1},a_{i-1}'', a_i', a_i, a_i'', a_{i+1}', a_{i+1}.$$
\end{itemize} 
If one of these is false, go to the next twelve-tuple.  Otherwise build $G_i$ as
  follows.  Initally set $G_i = G$. For $i= 1, 2, 3$, delete from $G_i$ all neighbours of $b_i$ except
  $a_i$, and delete all neighbours of $a_i$ except $b_i, a'_i, a''_i$.

  Now, go through the following loop.  While there exists a vertex $v$
  that is the centre of a star cutset of $G_i$ that, for some $j\in \{1,2,3\}$, separates $\{a_j\}$
  from $\{a_1, a_2, a_3\}\setminus\{a_j\}$, put $G_i \leftarrow G_i\sm v$.  Note that this
  loop can be performed in time $O(n^2|E(G)|)$, because for each $v$,
  computing the connected components of $G_i \sm (N(v) \sm \{a_1, a_2,
  a_3\})$ decides whether $v$ is the centre of a star cutset that
  separates $\{a_j\}$ from $\{a_1,a_2, a_3\}\setminus\{a_j\}$.  If some vertex of $F_i$
  is erased during the loop, go to the next
  12-tuple.  Clearly, after the loop, $F_i$ is a frame in $G_i$ and no
  vertex of $G_i$ is the centre of a star cutset that separates some $\{a_j\}$
  from $\{a_1,a_2, a_3\}\setminus\{a_j\}$. 

  If $G$ contains no doily, then clearly the same is true for $G_i$ for all $i \in \{1,
  \dots, k\}$.  Suppose conversely that $G$ contains a doily.  Let $K$ be
  a minimum doily of $G$.  Since no vertex of $K$ is the centre of a
  star cutset that separates vertices of the cycle of $K$, it follows
  that for some twelve-tuple $F_i$ made of vertices of $K$, a pair $(G_i,
  F_i)$ is generated, such that $G_i$ contains $K$ and $F_i$ is a
  frame for $K$.  So, $(G_i, F_i)$ is a trackable pair and $G_i$
  contains a doily.
\end{proof}

The following is less obvious.

\begin{lemma}
  \label{l:detectDoily}
  There is an $O(n^2)$-time algorithm, whose input is a pair $(G, F)$,
  where $G$ is a graph and $F$ a frame in $G$, and whose output is an
  induced subgraph $K$ of $G$, such that if $(G, F)$ is a trackable
  pair and $G$ contains a doily, then $K$ is a doily.
\end{lemma}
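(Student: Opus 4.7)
The plan is to reconstruct, for each $i\in\{1,2,3\}$, the path $P_i$ of the hypothetical minimum doily $K^\ast$ that runs from $a_{i-1}$ to $a_{i+1}$ and avoids $a_i$. The frame already pins down the first and last interior vertex of $P_i$, namely $a'_{i-1}$ (adjacent to $a_{i-1}$) and $a''_{i+1}$ (adjacent to $a_{i+1}$); only the middle portion needs to be found. My proposal is to produce $P_i$ by a single breadth-first search in a suitable subgraph of $G$, output the union of the three paths together with $b_1,b_2,b_3$, and argue correctness under the trackability hypothesis.

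First I would build, for each $i$, an auxiliary graph $H_i$ from $G$ by deleting $a_i$, the three leaves $b_1,b_2,b_3$, and every neighbour of $a_{i-1}$ other than $a'_{i-1}$ together with every neighbour of $a_{i+1}$ other than $a''_{i+1}$ (so that any path from $a_{i-1}$ to $a_{i+1}$ in $H_i$ is forced through $a'_{i-1}$ and $a''_{i+1}$). I would then run BFS in $H_i$ to obtain a shortest path $Q_i$ from $a'_{i-1}$ to $a''_{i+1}$; prepending $a_{i-1}$ and appending $a_{i+1}$ gives the candidate $P_i$. Because any shortest path in any graph is automatically chordless, each $P_i$ is an induced path. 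Three BFS runs cost $O(n+m)$ each, which is $O(n^2)$ in total. The output is the induced subgraph $K$ of $G$ with vertex set $V(P_1)\cup V(P_2)\cup V(P_3)\cup\{b_1,b_2,b_3\}$.

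For correctness, suppose $(G,F)$ is trackable and $G$ contains a doily, and let $K^\ast$ be a minimum doily of $G$ for which $F$ is a frame, with corresponding paths $P_1^\ast,P_2^\ast,P_3^\ast$. Using the degree-three and pendant conditions in the frame, one checks that each $P_i^\ast$ lies entirely in $H_i$, so $|V(Q_i)|\le|V(P_i^\ast)|$ for every $i$. The aim is to upgrade these length inequalities into the statement that $P_1,P_2,P_3$ are pairwise internally disjoint and have no edge between distinct paths, so that $K$ really is an induced cycle through $a_1,a_2,a_3$ with the required three pendants attached.

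The main obstacle is exactly this ``no chord and no shared interior vertex'' step: shortest-path arguments by themselves do not forbid $P_1$ from meeting or being adjacent to $P_2$. To handle this I would argue by contradiction, assuming some offending chord or shared interior vertex and then splicing portions of the $P_i$ with portions of the $P_i^\ast$ to manufacture a doily strictly smaller than $K^\ast$, contradicting its minimality. The star-cutset clause of trackability is what allows such a rerouting to be carried out: if some vertex $v$ blocked every such exchange, $v$ (possibly together with some of its neighbours) would separate one $a_j$ from the others, contradicting the hypothesis that no vertex is the centre of such a star cutset. Executing this exchange argument cleanly, and verifying that the resulting configuration is genuinely a smaller doily, is where I expect the real technical work to lie.
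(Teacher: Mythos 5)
Your algorithm is essentially the paper's: compute for each $i$ a shortest path $Q_i$ playing the role of $P_i$ and assemble the three paths with $b_1,b_2,b_3$ (the paper takes shortest paths from $a'_{i-1}$ to $a''_{i+1}$ directly, without your auxiliary graph $H_i$, since the frame already forces $a_{i\pm 1}$ to have degree three). But you explicitly defer the correctness argument to a ``splicing'' to-do, and that is exactly where the content of the lemma lives, so this is a genuine gap rather than a different route.

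The missing ingredient is Lemma~\ref{l:noMajor}: if $(G,F)$ is trackable and $K^\ast$ is a minimum doily of $G$ with frame $F$, then no vertex of $G\setminus V(K^\ast)$ is \emph{major} (has neighbours in all three $P_i^\ast$). This is the \emph{only} place where the star-cutset clause of trackability is used: the paper shows that a carefully chosen major vertex $v$ and its neighbours would form a star cutset separating $a_1$ from $\{a_2,a_3\}$, contradicting trackability. Your sketch instead appeals to the star-cutset clause directly inside a generic exchange argument (``if some vertex $v$ blocked every such exchange, $v$ would separate one $a_j$ from the others''); this is not how the obstruction arises. A chord between two of your computed paths, or a shared interior vertex, does not by itself yield a star cutset, and no such reduction is given. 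Without first killing major vertices you cannot control what happens when you try to reroute.

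Once major vertices are excluded, the real engine is Lemma~\ref{nojumps}: for any $s,t\in V(H_{K^\ast})\setminus\{a_i\}$ and any $s$--$t$ path $Q$ in $G\setminus a_i$, one has $|Q|\ge |P|$ where $P$ is the arc of $K^\ast\setminus a_i$ between $s$ and $t$, and in the tight case the interior of $Q$ neither lies in nor is adjacent to $V(H_{K^\ast})\setminus V(P)$. This is proved by a double induction (on $|Q|$, then on $|V(H_{K^\ast})\setminus V(P)|$) and uses the absence of major vertices at several points, together with the minimality of $K^\ast$. That lemma is precisely the statement that rules out the chords and shared vertices you worry about; applying it three times (Lemma~\ref{l:replace}) shows the three shortest paths can be substituted one at a time while preserving minimum-doily-ness and major-freeness. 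Your contradiction-by-splicing plan is in the right spirit, but the bookkeeping in Lemma~\ref{nojumps} (and the prerequisite Lemma~\ref{l:noMajor}) is the substance that is missing.
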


The proof of \ref{l:detectDoily} is postponed to the next two sections:
in Section~\ref{s:cleaning}, we show that when $(G, F)$ is a trackable
pair, every vertex attaches ``locally'' to every minimum
doily of $G$ with frame $F$ (this will be defined formally); and in
Section~\ref{sec:spd}, we take advantage of this to
prove~\ref{l:detectDoily} with the shortest path detector method.
Assuming all this, we can now prove our main result.

\subsection*{Proof of \ref{th:main}}

Here is an algorithm.  Step~1: run the algorithm from \ref{l:cleaning}
for $G$.  If a doily on at most eight vertices is found, then stop.  Otherwise,
go to Step~2: run the algorithm from \ref{l:detectDoily} for all pairs
$(G_i, F_i)$ generated in Step~1.  If some doily is found, then
stop.  Otherwise output ``$G$ contains no doily'' and stop.

Let us prove the correctness of this algorithm.  If the algorithm
outputs a doily, then $G$ obviously contains a doily (so the answer is
correct).  Suppose conversely that $G$ contains a doily.  If some doily of
$G$ has fewer than nine vertices, it is detected in Step 1.  Otherwise,
Step~1 provides a trackable pair $(G_i, F_i)$ that contains a doily.
So, in Step~2, when $(G_i, F_i)$ is considered (or possibly before), a
doily is found. This proves~\ref{th:main}.~$\Box$

\section{Cleaning major vertices}
\label{s:cleaning}

When $P$ is a path or cycle, its \emph{length} is the number of edges in $P$.
When $H$ is an induced cycle and $b \notin V(H)$ is a vertex with a
unique neighbour $a\in V(H)$, we say that the vertex $b$ is a
\emph{tuft for $H$ at $a$}.  We say that vertices $b_1,b_2,b_3$ form a {\em tufting} for $H$ if $b_1,b_2,b_3$ are pairwise non-adjacent,
each is a tuft for $H$, and no two of them have the same neighbour in $V(H)$. So, a doily is an induced cycle with a tufting.
A \emph{hole} is an
induced cycle with at least four vertices.  When $K$ is a subgraph of some
graph $G$ and $v$ a vertex of $G$, we define $N_K(v) = N(v) \cap
V(K)$.

When $K$ is a doily of $G$ and $F$ is a frame for~$K$, we use the
following notation and definitions.  Let $F = (b_1, b_2, b_3, a_1,
a_2, a_3,  \dots)$ (we do not need to name the other vertices of
the frame in this section).  We denote by $H_K$ the unique cycle
of~$K$.  So, $H_K$ is the union of three disjoint chordless paths $P_1, P_2, P_3$,
where for $i = 1,2,3$, $P_i$ is the path of $K\setminus a_i$ from $a_{i+1}$ to $a_{i-1}$

In what follows, all computations with indices are
modulo three.  Let us assign an orientation ``clockwise'' to the cycle $H_K$, such that $a_1,a_2,a_3$ are in clockwise order.
For any vertex $v$ of $H_K$, let $v^+$ be the vertex of $H_K$ that follows $v$ in clockwise order, and let $v^-$
be the vertex that precedes $v$. 
A vertex $v\in V(G)\sm V(K)$ is \emph{minor (with respect to $K$)} if for some
$i=1, 2, 3$, $N_K(v)$ is a subset of some subpath of $P_i$ of length at most two.
A vertex $v\in V(G) \sm V(K)$ is \emph{major (with respect
  to $K$)} if it has neighbours in $P_i$ for all $i=1, 2, 3$.  For $i=1,
2, 3$, when $v$ has at least one neighbour in $P_{i}$, we define
$y_{i+1}(v)$ as the neighbour of $v$ in $P_{i}$ that is closest to
$a_{i+1}$ (along $P_{i}$).  Note that $v$ is non-adjacent to $a_{i-1}, a_{i+1}$ because the
latter both have degree three, and so $y_{i+1}(v)$ is an internal vertex
of $P_{i}$.
Similarly,
for $i=1, 2, 3$, when $v$ has at least one neighbour in $P_{i}$, we
define $x_{i-1}(v)$ as the neighbour of $v$ in $P_{i}$ that is closest to
$a_{i-1}$ (along $P_{i}$).
For $i=1, 2, 3$, when $v$
has neighbours in $P_{i-1}$ and $P_{i+1}$, we define $W_i(v)$ to be the path 
$x_{i}(v) \d P_{i+1} \d a_i \d P_{i-1} \d y_{i}(v)$.

Our goal in this section is the following statement.

\begin{lemma}
  \label{l:noMajor}
  If $G$ is a graph and $(G, F)$ is a trackable
  pair, and $K$ is a minimum doily of $G$, with frame $F$,
  then every vertex in $V(G)\setminus V(K)$ is minor with respect to $K$ in $G$.
\end{lemma}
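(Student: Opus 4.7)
Plan. Suppose, for a contradiction, that some $v \in V(G) \setminus V(K)$ is major with respect to $K$. The strategy is to derive a contradiction with either the minimality of $K$ or the third clause of trackability of $(G,F)$.

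First I would set up the local structure around $v$. Because each $a_i$ has degree three in $G$ with prescribed neighbours $b_i, a_i', a_i''$ and $v \notin V(K)$, the vertex $v$ is non-adjacent to $a_1, a_2, a_3$. In particular, for every $i$ the vertices $x_i(v) \in P_{i+1}$ and $y_i(v) \in P_{i-1}$ are well defined and lie in the interior of their paths. By the choice of $x_i(v)$ and $y_i(v)$ as the neighbours of $v$ closest to $a_i$, the vertex $v$ has no neighbour in the interior of the subpath $W_i(v)$, and $W_i(v)$ itself is chordless as a subpath of the hole $H_K$. Hence, for each $i \in \{1,2,3\}$, the subgraph $C_i := W_i(v) \cup \{v\}$ is a hole of $G$ containing $a_i$.

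Next I would try to convert this structure into a doily strictly smaller than $K$. The natural candidate is the cycle $H'$ obtained from $H_K$ by choosing some $i$ and replacing the subpath of $P_i$ strictly between $y_{i+1}(v)$ and $x_{i-1}(v)$ with the length-two detour $y_{i+1}(v) \d v \d x_{i-1}(v)$. Then $H'$ still contains $a_1, a_2, a_3$, the vertices $b_1, b_2, b_3$ remain a tufting, and $H'$ is strictly shorter than $H_K$ whenever the deleted subpath has length at least three. If $H'$ is chordless, we obtain a doily strictly smaller than $K$, contradicting the minimality of $K$. The obstacle is that $v$, being major, also has neighbours in $P_{i-1} \cup P_{i+1}$, each of which is a potential chord of $H'$; this is exactly the tension produced by $v$ being major.

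The heart of the proof is then a case analysis on the number and positions of the neighbours of $v$ on each $P_j$, designed to show that every configuration falls into one of two buckets: either some choice of $i$ (possibly combined with iterated shortcuts through $v$ or through a second major vertex) produces a chordless $H'$ genuinely shorter than $H_K$, yielding a smaller doily; or else some vertex among $\{v, x_j(v), y_j(v) : j=1,2,3\}$ serves as the centre of a star cutset $\{u\} \cup S$, with $S \subseteq N(u) \setminus \{a_1,a_2,a_3\}$, that separates one $\{a_j\}$ from $\{a_1,a_2,a_3\} \setminus \{a_j\}$ in $G$, contradicting the third clause of trackability. The hardest case, and the main obstacle I anticipate, is the one in which $v$ has exactly one neighbour on each $P_i$: no intra-$P_i$ shortcut through $v$ is available, so the smaller-doily route is blocked, and the argument must exhibit a star cutset centred at $v$ with $S = N(v)$. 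Here the minimality of $K$ and the three holes $C_1, C_2, C_3$ should be used to rule out external $G$-paths between the three arcs of $H_K \setminus N(v)$ that are forced to contain distinct $a_j$'s, since any such path would, when concatenated with portions of the $C_i$'s, yield a competing doily contradicting the minimality of $K$.
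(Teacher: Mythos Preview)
Your logical skeleton is sound---either a major vertex leads to a smaller doily, or it forces a star cutset, and both contradict the hypotheses---but the content is missing in two concrete ways.

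First, your anticipated ``hardest case'', where $v$ has exactly one neighbour on each $P_i$, never occurs. The paper's first lemma (\ref{l:mM}) shows that any vertex that is not minor is major, and every major vertex is adjacent to $x_i(v), x_i(v)^-, x_i(v)^{--}$ and to $y_i(v), y_i(v)^+, y_i(v)^{++}$ for each $i$: a major vertex has at least three consecutive neighbours at each of the six ``landing points''. This is proved by repeatedly exhibiting six- or seven-vertex doilies if the adjacencies fail, and it is the structural fact that drives everything else. Your plan never discovers this, and without it your shortcut $H'$ through $v$ is always badly chorded, so the ``some choice of $i$'' route never gets off the ground in the way you describe.

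Second, and more seriously, the contradiction does not come from $v$ alone. The paper chooses $v$ optimally (minimising $|W_1(v)|$, then $|W_2(v)|+|W_3(v)|$), uses the \emph{absence} of a star cutset at $v$ to obtain a shortest path $P = p_1\hbox{-}\cdots\hbox{-} p_k$ from the interior of $W_1(v)$ to $V(K)\setminus(W_1(v)\cup N_K(v))$ avoiding $N[v]$, and then runs a thirteen-step case analysis on $p_1, p_2, p_3$. The key phenomenon is that $p_1$ and $p_2$ are themselves forced to be major, and the contradiction comes from how these several major vertices interact: the paper develops a calculus of ``disagree at $i$'', ``tie at $i$'', ``$u$ beats $v$ at $i$'' (Lemmas~\ref{l:dis}--\ref{l:1jump}) precisely to handle this. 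Your proposal to ``rule out external $G$-paths'' by splicing a putative path with pieces of the $C_i$ and finding a competing doily is exactly the right instinct, but carrying it out requires this machinery about pairs of major vertices; a single major vertex and its three holes $C_i$ are not enough.
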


Throughout this section, $G$ is a graph that contains a doily and
$(G, F)$ is a trackable pair, and $K$ is some minimum doily of $G$, with frame $F$.  
The proof
goes through several lemmas.  The idea is that if there are major
vertices for $K$, then one of them is the centre of a
star cutset that separates $a_1$ from $\{a_2, a_3\}$, which
contradicts that $(G, F)$ is trackable.  
Note that from the definition of a
trackable pair, $K$ has at least nine vertices (so $H_K$ is of length at
least six, but possibly some $P_i$ is of length one).

\begin{lemma}
  \label{l:mM}
  If $v\in V(G) \sm V(K)$, then either $v$ is minor or $v$ is major.
  Suppose that $v$ is major.  Then for all $i\in\{1, 2, 3\}$, 
$x_i(v)$, $x_i(v)^-$, $x_i(v)^{--}$ are
  internal vertices of $P_{i+1}$ and are all adjacent to $v$; and 
$y_i(v)$, $y_i(v)^+$, $y_i(v)^{++}$
  are internal vertices of $P_{i-1}$, and all adjacent to $v$.
\end{lemma}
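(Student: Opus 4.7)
The plan is to prove both parts of Lemma \ref{l:mM} by contradiction, alternating between the minimality of $K$ (which forbids doilies with strictly fewer vertices) and the trackability hypothesis (which forbids any star cutset separating some $a_j$ from $\{a_1,a_2,a_3\}\setminus\{a_j\}$). A preliminary observation is that since each $b_j$ has degree $1$ with unique neighbour $a_j$ and each $a_j$ has degree exactly $3$ with all three neighbours in $V(K)$, no vertex of $V(G)\setminus V(K)$ is adjacent to any $a_j$ or $b_j$; hence $N_K(v)$ consists entirely of internal vertices of $P_1, P_2, P_3$.

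For the dichotomy, suppose for contradiction that $v$ is neither minor nor major. Then either $N_K(v) \cap V(P_i)$ contains two vertices at $P_i$-distance at least $3$ for some $i$ (\emph{long-span case}), or $N_K(v)$ meets exactly two of the paths (\emph{two-side case}). In the long-span case, let $u$ and $w$ be $v$'s extreme $P_i$-neighbours; replacing the subpath of $P_i$ between $u$ and $w$ by the detour $u \d v \d w$ yields a chordless cycle that still passes through all three $a_j$ and is strictly shorter than $H_K$, so together with $b_1, b_2, b_3$ it is a doily smaller than $K$, contradicting minimality. In the two-side case, say $v$ has neighbours in $P_1$ and $P_2$ but none in $P_3$. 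I split on whether $a'_3$ (the $P_1$-neighbour of $a_3$) and $a''_3$ (the $P_2$-neighbour of $a_3$) lie in $N(v)$. If both do, then $\{v, a'_3, a''_3\}$ is a star cutset centred at $v$ isolating $\{a_3, b_3\}$ from $\{a_1, a_2\}$ (joined through the intact $P_3$), violating trackability. If exactly one does, say $a'_3 \in N(v)$ and $a''_3 \notin N(v)$, pick any $w \in N(v) \cap V(P_2)$ and form the induced cycle through $a'_3$, $v$, $w$, along $P_2$ from $w$ to $a_1$, along $P_3$ to $a_2$, and back along $P_1$ to $a'_3$; this cycle bypasses $a_3$, is shorter than $H_K$, and has $a'_3$ as the unique neighbour of $a_3$ on it, so with tufts $b_1, b_2, a_3$ it is a doily smaller than $K$.

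The main obstacle in the dichotomy is the remaining sub-case in which neither $a'_3$ nor $a''_3$ lies in $N(v)$: the natural detour cycles then miss every neighbour of $a_3$, so the ``use $a_3$ as the third tuft'' trick fails directly. I expect to handle this by reducing to the long-span case within $P_1$ or $P_2$ whenever $v$'s neighbourhood is spread enough on either, and otherwise producing a star cutset centred at $v$ with a suitable $S \subseteq N(v)$ separating $a_3$ from $\{a_1, a_2\}$; the delicate step is verifying that $G \setminus (\{v\} \cup S)$ has no alternative path between the two sides through $V(G) \setminus V(K)$.

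For the structural claim, assume $v$ is major; by symmetry it suffices to handle the assertion about $x_1(v) \in V(P_2)$. If $x_1(v)$ lies within $P_2$-distance $2$ of $a_3$ (so that $x_1(v)^-$ or $x_1(v)^{--}$ fails to be internal to $P_2$), or if any of $x_1(v), x_1(v)^-, x_1(v)^{--}$ fails to be adjacent to $v$, then using a $v$-neighbour in $P_1$ or $P_3$ (available by majorness) I build a cycle through $v$ that shortcuts the problematic $P_2$-segment near $a_3$ or near $x_1(v)$; the resulting cycle still contains all three $a_j$ and is strictly shorter than $H_K$, giving a doily smaller than $K$. The main technical point here is verifying chordlessness of these replacement cycles, which follows from choosing the shortcut endpoints to be extreme $v$-neighbours on the relevant paths.
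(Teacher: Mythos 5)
The dichotomy part of your plan has a genuine gap that you yourself acknowledge: the two\hbox{-}side sub\hbox{-}case where neither $a'_3$ nor $a''_3$ lies in $N(v)$. Your proposed fixes do not clearly work. The ``reduce to the long\hbox{-}span case within $P_1$ or $P_2$'' reduction fails when $v$ has neighbours in \emph{both} $P_1$ and $P_2$: replacing a spread\hbox{-}out $P_1$\hbox{-}segment by a detour through $v$ produces chords from $v$ to its $P_2$\hbox{-}neighbours, so the resulting cycle is not induced. And the alternative ``produce a star cutset centred at $v$'' requires showing that every $G$\hbox{-}path from $a_3$ to $\{a_1,a_2\}$ meets $\{v\}\cup S$ for some $S\subseteq N(v)$ --- which is precisely the sort of global property that is hard to verify and is not available to you here (whereas the paper's uniform hole argument sidesteps it entirely). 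The ``exactly one of $a'_3,a''_3$'' sub\hbox{-}case is also flawed: the cycle you describe runs ``along $P_1$ from $a_2$ back to $a'_3$'', but if $v$ has any neighbour in the interior of that segment (which nothing forbids), the cycle has a chord to $v$. If you instead trim to the extreme $v$\hbox{-}neighbour $y_2(v)$ to restore chordlessness, you lose $a'_3$ from the cycle and with it the $a_3$\hbox{-}as\hbox{-}tuft device you rely on. The paper handles the whole two\hbox{-}side case uniformly with the single hole $v\d x_1(v)\d P_2\d a_1\d P_3\d a_2\d P_1\d y_2(v)\d v$ plus an auxiliary $4$\hbox{-}hole contingency when $x_1(v), y_2(v)$ are both neighbours of $a_3$, and derives the contradiction purely from the absence of a third tuft; no star cutset is invoked.

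The second part (the local structure of a major vertex) is treated too vaguely to be checkable. You propose to shortcut $H_K$ through $v$ and keep all three vertices $a_1,a_2,a_3$ in a strictly shorter induced cycle, but this is impossible: any cycle through $v$ containing all three $a_j$ must traverse all of $P_1,P_2,P_3$ except for the shortcut, and since $v$ is major it has at least one neighbour on each $P_i$, so $v$ necessarily has chords to the non\hbox{-}shortcut portion. ``Choosing the shortcut endpoints to be extreme $v$\hbox{-}neighbours'' controls chords only on the two paths carrying the shortcut endpoints, not on the third. The paper instead uses the hole $W_1(v)\cup\{v\}$ (which contains only $a_1$, not all three $a_j$), observes that $b_1$ and $y_2(v)$ are tufts for it, and then extracts the required adjacencies $v\d x_1(v)^-$, $v\d x_1(v)^{--}$ step by step from the prohibition on a third tuft together with small auxiliary holes and triangles. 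Your sketch does not reconstruct that chain of deductions, and the shortcut you gesture at does not yield an induced cycle.
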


\begin{proof}
  First we show that if $N_K(v) \subseteq V(P_i)$ for some $i=1, 2, 3$, then $v$ is
  minor.  For if $v$ has neighbours in $P_i$ that are not in a
  three-vertex path of $P_i$, then $P_i\cup \{v\}$ contains a chordless 
  path $P'_i$ from $a_{i-1}$ to $a_{i+1}$, shorter than $P_i$.
  Replacing $P_i$ by $P'_i$ in $K$, we obtain a doily smaller than $K$,
  a contradiction.

  Hence, from the symmetry,  we may assume that $v$ has neighbours in $P_1$ and $P_{2}$.
  Suppose that $v$ has no neighbours in $P_{3}$. Then
  $$H = v \d x_{1}(v) \d P_{2} \d a_1 \d P_{3} \d a_{2} \d P_1 \d y_{2}(v) \d v$$
 is a hole and $b_1, b_{2}$ are tufts for $H$.  We
  claim that $H$ is smaller than $H_K$.  Suppose not; then $x_{1}(v)$ and
  $y_{2}(v)$ are neighbours of $a_{3}$, and so $H' = a_{3} \d
  x_{1}(v) \d v \d y_{2}(v) \d a_{3}$ is a hole of length four.  Since $H_K$ has length at least six (because every
doily has at least nine vertices), it follows that $y_{2}(v)^{-}$,
  $x_{1}(v)^{+}$ are non-adjacent. Hence
  $b_{3}$, $y_{2}(v)^{-}$ and $x_{1}(v)^{+}$ form a tufting for
  $H'$, giving a seven-vertex doily, a contradiction.  This
  proves our claim that $H$ is smaller than $H_K$.  So, since $b_1, b_{2}$ are tufts for $H$ and
  $H$ is smaller than $H_K$, it cannot be that $H$ has a third tuft at some vertex different from $a_1,a_{2}$.
  In particular, $v$ is adjacent to $y_{2}(v)^+$ (for
  otherwise $y_{2}(v)^+$ is a tuft at $y_{2}(v)$).  Symmetrically,
  $v$ is adjacent to $x_{1}(v)^-$, and, in particular, $x_1(v)^-$ is different from $a_{3}$, and so $x_{1}(v)$
  is non-adjacent to $y_{2}(v)^{++}$.  Now, $v$ is non-adjacent to
  $y_{2}(v)^{++}$ (for otherwise $y_{2}(v)^{++}$ is a tuft for $H$ at
  $v$).  Hence, $v \d y_{2}(v) \d y_{2}(v)^{+} \d v$ is a triangle
  for which $x_{1}(v)$, $y_{2}(v)^{-}$ and $y_{2}(v)^{++}$ form a tufting, giving a six-vertex doily, a contradiction.
  Thus we have proved that $v$ has
  neighbours in $P_{3}$, so $v$ is major.

  Let us now prove the second statement of the theorem, and we may assume that $i = 1$.  Assume that $v$ is major; then the hole
  $H$ formed by $W_1(v)$ and $v$ is smaller than $H_K$, and $b_{1},
  y_{2}(v)$ are tufts for $H$.  So, $H$ cannot have a third tuft at some vertex different from $a_1,v$.
  Now $x_1(v)^-$ is non-adjacent to $y_{2}(v)$ because
  otherwise $v \d x_1(v) \d a_{3} \d y_{2}(v) \d v$ is a hole for
  which $x_1(v)^{+}$, $y_{2}(v)^{-}$ and $b_{3}$ form a tufting, 
giving a seven-vertex doily, a contradiction.  Hence, $v$ is
  adjacent to $x_1(v)^-$ for otherwise $x_1(v)^-$ would be a third tuft
  for $H$ at $x_1(v)$.  In particular, $x_1(v)^-\ne a_{3}$. Symmetrically, $v$ is adjacent to
  $y_{2}(v)^+$, and $y_{2}(v)^+\ne a_{3}$. Consequently $x_1(v)^{--}$ is non-adjacent to $y_{2}(v)$.
  Now, $v$ is adjacent to $x_1(v)^{--}$, for otherwise, $v \d x_1(v) \d
  x_1(v)^- \d v$ is a triangle for which $y_{2}(v)$, $x_1(v)^{+}$ and
  $x_1(v)^{--}$ form a tufting.  It follows that $x_1(v)^{--}$ is an internal vertex of $P_{2}$. Thus we have
proved that $v$ is adjacent
  to $x_1(v)$, $x_1(v)^-$ and $x_1(v)^{--}$.  Symmetrically, $y_1(v)$,
  $y_1(v)^+$, $y_1(v)^{++}$ are all internal vertices of $P_{i-1}$, and $v$ is adjacent to them all.
\end{proof}

Let $u$ and $v$ be two major vertices and $i \in \{1, 2,
3\}$.  We say that $u$ and $v$ \emph{disagree at $i$} when $x_{i}(u)$,
$x_{i}(v)$, $y_{i}(u)$, $y_{i}(v)$ are pairwise distinct and appear
along the path $a_{i-1} \d P_{i+1} \d a_i \d P_{i-1} \d a_{i+1}$ in one of the
following orders: 
\begin{itemize}
\item $x_{i}(u)$, $x_{i}(v)$, $y_{i}(u)$, $y_{i}(v)$ or
\item $x_{i}(v)$, $x_{i}(u)$, $y_{i}(v)$, $y_{i}(u)$.  
\end{itemize}
Note that, if $u,v$ are non-adjacent, then $u$ and $v$
disagree at $i$ if and only if there exists an induced path from $u$
to $v$ that goes through $a_i$ and whose interior is in $a_{i-1} \d
P_{i+1} \d a_i \d P_{i-1} \d a_{i+1}$.  Moreover, this induced path is
unique and we denote it by $W_i(u, v)$.

A vertex $z \in V(H_K)$ is a \emph{tie at $i$ for $u$ and $v$} if $i
\in \{1, 2, 3\}$ and either $z = x_i(u) = x_i(v)$ or $z = y_i(u) =
y_i(v)$.  

We say that \emph{$u$ beats $v$ at $i$} when $x_{i}(u)$, $x_{i}(v)$,
$y_{i}(u)$, $y_{i}(v)$ are pairwise distinct and appear along $a_{i-1}
\d P_{i+1} \d a_i \d P_{i-1} \d a_{i+1}$ in the following order:
$x_{i}(v)$, $x_{i}(u)$, $y_{i}(u)$, $y_{i}(v)$.

It is clear that when $u$ and $v$ are major vertices, then for each
$i=1, 2, 3$ exactly one of the following holds: $u$ and $v$ disagree
at $i$; or there is a tie for $u$ and $v$ at $i$; or $u$ beats $v$ at $i$;
or $v$ beats $u$ at $i$.

\begin{lemma}
  \label{l:dis}
  If $u$ and $v$ are two non-adjacent major vertices, then they
  disagree at at most one $i \in \{1, 2, 3\}$.
\end{lemma}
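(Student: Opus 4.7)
The plan is to derive a contradiction with the minimality of $K$. Suppose $u$ and $v$ disagree at two distinct indices, which by symmetry I may take to be $i=1$ and $i=2$; let $W_1 := W_1(u,v)$ and $W_2 := W_2(u,v)$ be the induced paths through $a_1$ and $a_2$ whose existence follows from the assumption of disagreement.

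First I would verify that $W_1$ and $W_2$ are internally vertex-disjoint, so that $C := W_1 \cup W_2$ is a hole. Only $P_3$ is potentially shared: $W_1$ uses a portion of $P_3$ near $a_1$ (bounded by $y_1(u)$ or $y_1(v)$, depending on the order) while $W_2$ uses a portion near $a_2$ (bounded by $x_2(u)$ or $x_2(v)$). Using Lemma~\ref{l:mM}, which forces three consecutive adjacencies on each side of a major vertex in each $P_j$, one checks that the two portions are separated by a gap of at least three vertices on $P_3$. A length accounting then yields $|C| \leq |H_K|-5$: each of the three arcs of $H_K$ that $C$ avoids---one inside each of $P_1, P_2, P_3$---has length at least three, again by Lemma~\ref{l:mM}.

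Next I would exhibit three tufts for $C$, producing a doily strictly smaller than $K$. The tufts $b_1$ and $b_2$ (of degree one in $G$) are automatically tufts for $C$ at $a_1$ and $a_2$. For the third tuft, the natural candidate is the vertex $y_2(u)^{++} \in V(P_1) \setminus V(C)$, which is adjacent to $u \in V(C)$ by Lemma~\ref{l:mM}; in the generic case this vertex has no other neighbour in $V(C)$, and so it is a tuft at $u$. In the boundary subcases in which $y_2(v) \in \{y_2(u)^{+}, y_2(u)^{++}\}$, a symmetric candidate on $P_2$ (for example $x_1(v)^{--}$, adjacent to $v$) serves instead, and when none of those works one descends further along the leftover $P_3$-segment.

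Checking that such a candidate exists under every configuration of the disagreement orders is the main obstacle: it amounts to a case split on how deeply the three-consecutive-neighbourhoods of $u$ and $v$ overlap on each $P_j$, using that $K$ has at least nine vertices so each $P_j$ is long enough for these neighbourhoods to be separated. Once the third tuft is in hand, $C$ together with $b_1$, $b_2$, and the third tuft is a doily of $G$ with strictly fewer vertices than $K$, contradicting minimality.
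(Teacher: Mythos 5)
The proposal has the right skeleton---form the hole $C = W_1(u,v) \cup W_2(u,v)$, use $b_1,b_2$ as two tufts, and hunt for a third tuft among the portion of $H_K$ left out of $C$---and this is the same first move as the paper. But the argument is not completed, and the place where it stalls is exactly the place the paper has to do something nonobvious.

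Your primary candidate $y_2(u)^{++}$ is not as robust as it looks: by \ref{l:mM} it is adjacent to $u$, but nothing prevents it from also being adjacent to $v$ (this happens whenever $y_2(v) \in \{y_2(u)^+, y_2(u)^{++}\}$), in which case it has two neighbours in $C$ and is not a tuft. The cleaner choice, which the paper makes, is whichever of $x_3(u), x_3(v)$ lies farther from $a_2$ on $P_1$: being the last neighbour of the relevant major vertex, it is automatically nonadjacent to the other one, and its two $P_1$-neighbours are off $C$ by \ref{l:mM}. You propose switching to a symmetric candidate on $P_2$ when the $P_1$ candidate fails, but the two failure conditions (small gap between $y_2(u)$ and $y_2(v)$, and symmetrically between $x_1(u)$ and $x_1(v)$) can hold simultaneously, and ``descending further along the leftover $P_3$-segment'' does not help, since vertices of $P_3$ near its middle need not be adjacent to $u$ or $v$ at all.

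More seriously, the case $x_3(u) = x_3(v)$ (a tie at $3$) is genuinely different and cannot be handled by any choice of vertex in $V(H_K)\setminus V(C)$: the last neighbour of $u$ on $P_1$ is also the last neighbour of $v$, so every vertex of $P_1\setminus V(C)$ adjacent to one of $u,v$ has at least two neighbours on $C$, and no vertex beyond $x_3(u)$ on $P_1$ is adjacent to $C$ at all. The paper handles this by abandoning $C$ and instead using the smaller hole $W_1(u,v) \cup \{x_3(u)\}$, with $b_1$ and the second and penultimate vertices of $W_2(u,v)$ as tufts. That change of hole is the essential idea missing from your proposal; without it, the case analysis you acknowledge as ``the main obstacle'' cannot in fact be closed.
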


\begin{proof}
  Suppose that $u$ and $v$ disagree at $1$ and $2$ say.
 Then $W_2(u, v) \cup W_{1}(u, v)$ is a hole
  smaller than $H_K$ for which $b_1, b_{2}$ and one of $x_{3}(u), x_{3}(v)$ are
  non-adjacent tufts unless $x_{3}(u) = x_{3}(v)$.  In this last
  case, $W_{1}(u, v) \cup \{x_{3}(u)\}$ is a hole for which
  $b_{1}$ and the second and penultimate vertices of $W_2(u, v)$ are
  non-adjacent tufts, a contradiction to the minimality of~$K$.
\end{proof}

\begin{lemma}
  \label{l:2ties}
  Let $u$ and $v$ be two non-adjacent major vertices.  If there is a
  tie for $u, v$ at distinct $i, j \in \{1, 2, 3\}$, then $N_K(u) =
  N_K(v)$.
\end{lemma}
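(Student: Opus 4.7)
The plan is to argue by contradiction: I assume $N_K(u) \neq N_K(v)$ and exhibit a doily on at most seven vertices, contradicting the hypothesis that $K$ is a minimum doily of $G$ (which by trackability has at least nine vertices). Without loss of generality the ties are at $i=1$ and $j=2$. I proceed by case analysis on the type of each tie ($x$ or $y$); I first spell out the case in which both ties are of $x$-type, fixing $p = x_1(u) = x_1(v) \in V(P_2)$ and $q = x_2(u) = x_2(v) \in V(P_3)$, and handle the three remaining type combinations by strictly analogous arguments.

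The first step is to check that $C = u \d p \d v \d q \d u$ is an induced four-hole: $uv \notin E(G)$ by hypothesis, and since $H_K$ is an induced cycle and $p, q$ lie on distinct subpaths $P_2, P_3$, they can be adjacent only as consecutive vertices of $H_K$, which a quick check rules out. Next I would produce three pairwise non-adjacent tufts for $C$ attached at three different vertices of $C$. Let $p^+$ be the neighbour of $p$ on $P_2$ closer to $a_1$ (possibly $a_1$ itself), and $q^+$ the analogous vertex for $q$ and $a_2$. Since $p$ is the closest neighbour to $a_1$ of both $u$ and $v$ in $P_2$, the vertex $p^+$ is adjacent to neither $u$ nor $v$, and as $p^+$ lies in a different subpath from $q$ it is also not adjacent to $q$; hence $p^+$ is a tuft at $p$. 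Symmetrically $q^+$ is a tuft at $q$, and $p^+ q^+ \notin E(G)$. For the third tuft, the assumption $N_K(u) \neq N_K(v)$, after possibly swapping $u$ and $v$, gives some $w \in N_K(u) \setminus N_K(v)$; since each $b_i$ has degree one we have $w \in V(H_K)$, and by Lemma~\ref{l:mM} the vertices $u, v$ already share the blocks $\{p^{--}, p^-, p\}$ on $P_2$ and $\{q^{--}, q^-, q\}$ on $P_3$, so $w$ must lie in $V(P_1)$, in $V(P_2)$ away from $a_1$, or in $V(P_3)$ away from $a_2$. In each case I would verify that $w$ is non-adjacent to every vertex in $\{v, p, q, p^+, q^+\}$, producing a tuft at $u$ pairwise non-adjacent to $p^+$ and $q^+$. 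The induced subgraph on $\{u, v, p, q, p^+, q^+, w\}$ would then be a doily on seven vertices, contradicting the minimality of $K$.

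The main obstacle will be the degenerate configurations in which some $P_i$ is short enough that $p^+$, $q^+$, or $w$ coincides with, or becomes adjacent to, one of the other special vertices. For example, if $|E(P_3)| = 2$ then the unique internal vertex of $P_3$ is simultaneously $y_1 = x_2 = q$, and if in addition $p^+ = a_1$ then $p^+$ is adjacent to $q = a_1''$, so $p^+$ fails to be a tuft for $C$. In each such corner case I would replace $C$ by a longer hole such as $u \d q \d a_2 \d P_1 \d y_2(u) \d u$ (still strictly shorter than $H_K$), draw the three tufts from $\{b_2, a_1, p^{--}\}$ or a suitable analogue, and again obtain a doily strictly smaller than $K$. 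Finally, the remaining subcases (one or both ties of $y$-type) are handled by parallel arguments in which $p^+, q^+$ are replaced by the appropriate neighbours on the side opposite the relevant $a_i$.
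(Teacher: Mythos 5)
Your plan is in the right spirit: the paper, too, builds a four-vertex hole out of $u$, $v$, and two tied vertices, equips it with tufts just inside the ties, and uses a vertex of $N_K(u)\triangle N_K(v)$ as a third tuft to get a doily on at most seven vertices. But you have skipped the paper's first and essential step, which is to show that if $u,v$ have ties at two of the indices then they in fact have ties at all three (this is itself a seven-vertex-doily argument). Having ties at all three indices is what lets the paper \emph{choose} which pair of indices to use \emph{after} fixing $w$: since \ref{l:mM} forces every path $P_i$ to have length at least four, $w$ can be adjacent to at most one of $a_1,a_2,a_3$, so one may assume $w$ is non-adjacent to $a_1$ and $a_2$ and then use the ties $z_1,z_2$. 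You fix the pair $\{1,2\}$ first and then look for $w$, so you lose this freedom, and the ``verify $w$ is non-adjacent to $p^+$ and $q^+$'' step genuinely fails: if the tie at $1$ is the $x$-tie with $p=x_1(u)=x_1(v)=a_1^-$, and if $w=y_1(u)=a_1^+$ happens to lie in $N_K(u)\setminus N_K(v)$, then $p^+=a_1$ and $w$ is adjacent to it, so $\{p^+,q^+,w\}$ is not a tufting for $C$. Nothing in your setup excludes this configuration, and no ``longer hole'' fallback is offered for it.

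Separately, your diagnosis of which corner cases need attention is off. You flag $|E(P_3)|=2$ as a degenerate possibility, but \ref{l:mM} already shows that a major vertex has at least three internal neighbours in each $P_i$; hence in the presence of a major vertex every $P_i$ has length at least four, and $|E(P_3)|=2$ simply cannot happen. So you are guarding against impossible degeneracies while leaving the actual one (the adjacency of $w$ to $p^+$ or $q^+$ when the tie sits one step from some $a_i$) unresolved. To repair the argument you would either have to re-introduce the missing lemma (two ties force a third, then re-choose the pair using $w$) or make a much more delicate direct case analysis than what is sketched.
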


\begin{proof}
If $z_i$ is a tie for $u,v$ at some $i\in \{1,2,3\}$, let  $z_i' = z_i^+$ if $z_i\in V(P_{i+1})$, and
$z_i' = z_i^-$ if $z_i\in V(P_{i-1})$.
Suppose that for some $i$, $z_i$ and $z_{i+1}$ are ties for $u, v$ at $i$ and
  $i+1$, but that there is no tie for $u, v$ at $i-1$.  So $H =
  u \d z \d v \d z' \d u$ is a hole. 
Then  $z_i'$ and $z_{i+1}'$ are tufts for
  $H$.  Since there is no tie for $u, v$ at $i-1$, from the symmetry we may assume that
  $x_{i-1}(u)$ is closer to $a_{i-1}$ than $x_{i-1}(v)$.  Hence,
  $x_{i-1}(u)$ is a third tuft for $H$ at $u$, forming a seven-vertex doily, a contradiction.

Thus there exist $z_1,z_2,z_3$ such that for $i = 1,2,3$, $z_i$ is a tie for $u, v$ at $i$.
Suppose that $N_K(u) \ne N_K(v)$, and let
$$w\in (N_K(u)\cup N_K(v))\setminus (N_K(u)\cap N_K(v)).$$
By~\ref{l:mM}, $w$ is adjacent to at most one of $a_1,a_2,a_3$, and so
we may assume that $w$ is non-adjacent to $a_1,a_2$. 
  So the subgraph $H_S$ induced on $\{z_1,z_2,u, v\}$ is
  a hole, and $z_1', z_2',w$ form a tufting
  for $H_S$, giving a seven-vertex doily, a contradiction. 
\end{proof}

\begin{lemma}
  \label{l:1tie}
  Let $u$ and $v$ be two non-adjacent major vertices.  If there is a
  tie for $u, v$ at $i \in \{1, 2, 3\}$, then either $N_K(u) \sm
  N_K(a_i) \subseteq N_K(v)$ or $N_K(v) \sm N_K(a_i) \subseteq
  N_K(u)$.  In particular $u$ and $v$ do not disagree at any $j\in
  \{1, 2, 3\}$.
\end{lemma}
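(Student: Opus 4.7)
I will establish the main containment first, and then derive the ``in particular'' clause from it. By symmetry, assume the tie is at $i = 1$ and takes the form $z = x_1(u) = x_1(v) \in V(P_2)$ (the case $y_1(u) = y_1(v)$ is symmetric). Suppose for contradiction that the main conclusion fails both ways, so there exist vertices $p \in N_K(u) \setminus (N_K(v) \cup N_K(a_1))$ and $q \in N_K(v) \setminus (N_K(u) \cup N_K(a_1))$. Note that $p$ and $q$ must be internal vertices of $H_K$, since the vertices $b_j$ have degree one and the $a_j$ have degree three with specified neighbours all in $V(K)$, ruling them out as neighbours of $u$ or $v$.

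The plan is to build from $u, v, z, p, q$ and an arc of $H_K$ a hole $H'$ strictly shorter than $H_K$ equipped with a tufting, which would yield a doily smaller than $K$ and contradict minimality. I take $H'$ to be the ``double-shortcut'' hole: replace the $H_K$-arc from $p$ through $a_1$ to $z$ by the two-edge path through $u$ between $p$ and $z$, replace the arc from $z$ through $a_3$ to $q$ by the two-edge path through $v$ between $z$ and $q$, and close up with the remaining $H_K$-arc from $q$ back to $p$ (passing through $a_2$). The hypotheses $p \notin N_K(v)$ and $q \notin N_K(u)$ are exactly what is needed to ensure $H'$ is induced, while Lemma \ref{l:mM} guarantees that each of the two bypassed arcs contains at least three vertices more than the corresponding shortcut leg, so $H'$ is strictly shorter than $H_K$. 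For the tufting I use: $b_2$ as a tuft at $a_2$ (which lies on the remaining arc); the $P_2$-vertex $z^+$ (the step from $z$ toward $a_1$) as a tuft at $z$, valid because Lemma \ref{l:mM} forces $z$ to be the closest $P_2$-neighbour of $a_1$ for both $u$ and $v$, so neither is adjacent to $z^+$; and a third tuft drawn from an analogous ``one-step-beyond'' vertex on the $P_1$ or $P_3$ side of the shortcut. Some boundary sub-cases (such as $z^+ = a_1$, or $p$ or $q$ coinciding with a frame vertex $a_j'$ or $a_j''$) require substituting in frame vertices such as $a_1''$, which remain valid tufts because of the degree and neighbourhood constraints imposed by the frame. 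A case analysis on which $P_k$ contains $p$ and $q$ determines the precise shortcut configuration; the case $p \in V(P_3)$, $q \in V(P_1)$ is the prototype and the remaining cases are variations.

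For the ``in particular'' clause, assume without loss of generality $N_K(u) \setminus N_K(a_1) \subseteq N_K(v)$. Disagreement at $j = 1$ requires $x_1(u) \ne x_1(v)$, which is ruled out by the tie. For $j \in \{2, 3\}$, Lemma \ref{l:mM} implies that $x_j(u)$ and $y_j(u)$ are internal vertices of $P_{j+1}$ and $P_{j-1}$ respectively, at distance at least three from $a_1$, so neither lies in $N_K(a_1) = \{b_1, a_1', a_1''\}$. The containment then forces $v$ to be adjacent to both $x_j(u)$ and $y_j(u)$. By the defining minimality of $x_j(v)$ and $y_j(v)$, this places $x_j(v)$ at least as close to $a_j$ along $P_{j+1}$ as $x_j(u)$, and $y_j(v)$ at least as close to $a_j$ along $P_{j-1}$ as $y_j(u)$. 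Neither of the two disagreement orders listed in the definition is consistent with both of these inequalities, so $u$ and $v$ do not disagree at $j$.

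The main obstacle will be the tuft-finding in the smaller-doily construction: depending on the positions of $p, q$ and on degenerate coincidences (such as $z^+ = a_1$, or $p$ or $q$ being one of $a_j'$ or $a_j''$), the choice of the third tuft and the verification of pairwise non-adjacency at three distinct hole-vertices demand careful and somewhat lengthy case analysis.
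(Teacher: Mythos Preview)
Your ``in particular'' derivation is fine, but the main construction has a genuine gap. The cycle $H' = p\d u\d z\d v\d q\d (\text{arc through }a_2)\d p$ is \emph{not} in general induced: both $u$ and $v$ are major, so each has neighbours in every $P_j$. In particular $u$ has neighbours in $P_1$ (namely $y_2(u)$ and, by Lemma~\ref{l:mM}, at least $y_2(u)^+, y_2(u)^{++}$), and these may lie on the $H_K$-arc from $q$ to $a_2$; likewise $v$ has neighbours in $P_3$ that may lie on the arc from $a_2$ to $p$. Your assertion that ``$p\notin N_K(v)$ and $q\notin N_K(u)$ are exactly what is needed to ensure $H'$ is induced'' handles only two of the many potential chords. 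Moreover, you have not fixed which $p,q$ to use, so you cannot even guarantee the cyclic order $p,a_1,z,a_3,q,a_2$ that your prototype assumes; and the third tuft is never actually specified or verified. The honest acknowledgement at the end that the tuft-finding ``demands careful and somewhat lengthy case analysis'' is accurate, but none of that analysis is present.

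The paper's proof avoids this difficulty by working with a \emph{four-vertex} hole rather than a long one containing an arc of $H_K$. It first invokes Lemma~\ref{l:2ties} to reduce to the case of a tie at~$1$ only; then, with $y_1(u)=y_1(v)$, Lemma~\ref{l:mM} forces both $u$ and $v$ to be adjacent to $y_1(u)^{++}$, so $H=u\d y_1(u)\d v\d y_1(u)^{++}\d u$ is automatically an induced $4$-cycle (no arc, hence no stray chords). The argument then shows that any witness $w\in N_K(v)\setminus(N_K(a_1)\cup N_K(u))$ either yields a seven-vertex doily with $H$, or is forced adjacent to $y_1(u)^{++}$, which in turn forces $x_3(v)\in N_K(u)$ and produces a second $4$-cycle hole with a tufting. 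The key idea you are missing is to exploit the density of neighbours guaranteed by Lemma~\ref{l:mM} to build a tiny hole through $u$ and $v$ directly, rather than trying to route through $H_K$.
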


\begin{proof}
  By \ref{l:2ties}, we may assume that there is a tie  for
  $u, v$ at $1$, but not at $2$ or $3$.  From the symmetry, we may assume that
  $y_{1}(u) = y_{1}(v)$.  By~\ref{l:mM},
  $u$ and $v$ are adjacent to $y_{1}(u)^+$ and $y_{1}(u)^{++}$.  So,
  $H = u \d y_{1}(u) \d v \d y_{1}(u)^{++} \d u$ is a hole and
  $y_{1}(u)^{-}$ is a tuft for $H$ at $y_{1}(u)$.

Since there is no tie for $u,v$ at $2$, we may assume that
$y_{2}(u)$ is strictly between $a_{2}$ and $y_{2}(v)$ on $P_1$. We claim that
$N_K(v) \sm N_K(a_1) \subseteq N_K(u)$. For suppose that there exists $w\in N_K(v) \sm N_K(a_1)$
such that $w\notin N_K(u)$. By \ref{l:mM}, $w$ is non-adjacent to $y_{2}(u)$, and so
$w, y_{1}(u)^{-}, y_{2}(u)$ are pairwise non-adjacent (in particular, $w,y_{1}(u)^{-}$ are non-adjacent since
$w\notin N_K(a_1)$). But 
$w, y_{1}(u)^{-}, y_{2}(u)$ are not a tufting for $H$, since they would
give a seven-vertex doily; and so $w$ is adjacent to $y_1(u)^{++}$.
In particular, $w\ne x_{3}(v)$, and so $x_{3}(v)$ is adjacent to $u$.
Hence $u \d y_{1}(u) \d v \d x_{3}(v) \d u$ is a hole and
$w,y_{1}(u)^{-},y_{2}(u)$ form a tufting, giving
a seven-vertex doily, a contradiction.
\end{proof}

\begin{lemma}
  \label{l:dis1}
  Let $u$ and $v$ be two non-adjacent major vertices that disagree at some $i\in \{1,2,3\}$. Then either $N_K(u) \setminus N_K(v)$
is a clique (and hence has one or two members) or $N_K(v) \setminus N_K(u)$ is a clique.
\end{lemma}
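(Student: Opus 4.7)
The plan is to argue by contradiction: suppose both $N_K(u)\sm N_K(v)$ and $N_K(v)\sm N_K(u)$ contain a non-adjacent pair, say $w_1,w_2$ and $z_1,z_2$ respectively, and use them to build a doily strictly smaller than $K$, contradicting minimality. Structural setup: by \ref{l:dis}, $u$ and $v$ disagree at exactly one index, which we may take to be $1$. Swapping $u$ and $v$ interchanges the two sides of the claim and reverses the disagreement order, so we may assume the order at $1$ is $x_1(u),x_1(v),y_1(u),y_1(v)$. By \ref{l:1tie} there are no ties at $2$ or $3$, so at each of those indices one of $u,v$ strictly beats the other; up to swapping the indices $2$ and $3$, the essentially distinct sub-cases are (A) $u$ beats $v$ at both $2$ and $3$, and (B) $u$ beats $v$ at $2$ while $v$ beats $u$ at $3$. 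In each sub-case, \ref{l:mM} pins down the 3-stretches of $u$ and of $v$ around every $a_i$, and the beats-relations dictate exactly how those 3-stretches nest or cross along each $P_j$.

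The hole construction uses the induced $u$--$v$ path $W_1(u,v)=u\d y_1(u)\c P_3\c a_1\c P_2\c x_1(v)\d v$ from the proof of \ref{l:dis}. I close it into a hole by producing a second induced $u$--$v$ path $Q$ whose interior lies in $V(H_K)\sm\{a_1\}$ and meets $W_1(u,v)$ only at $u,v$. Using the non-adjacent witnesses, I pick $w\in\{w_1,w_2\}$, $z\in\{z_1,z_2\}$, and an arc $S$ of $H_K$ from $w$ to $z$ avoiding $a_1$, such that $S$ contains no neighbour of $u$ or of $v$ other than $w$ and $z$. The existence of such an $S$ is the content of the sub-case analysis: since $w\notin N_K(v)$ and $z\notin N_K(u)$, the 3-stretch structure from \ref{l:mM} and the beats-orders at $2,3$ force an $H_K$-arc between the two neighbourhoods on which neither $u$ nor $v$ has a chord. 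Setting $Q=u\d w\c S\c z\d v$, the cycle $C:=W_1(u,v)\cup Q$ is induced in $G$ and strictly shorter than $H_K$, because $W_1(u,v)$ bypasses most of $P_1$ and $Q$ bypasses most of the near-$a_1$ portions of $P_2\cup P_3$.

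Tufting completes the contradiction: the frame vertex $b_1$ is a tuft for $C$ at $a_1\in V(C)$, and two further tufts come either from $\{b_2,b_3\}$ (whenever $a_2,a_3$ survive in $V(C)$) or from the unique $H_K$-neighbour of the first, respectively last, vertex of $S$ that lies outside $V(C)$; such a vertex has exactly one neighbour in $V(C)$ by the defining property of $S$, and pairwise non-adjacency of the three tufts follows either from that of $b_1,b_2,b_3$ in $K$ or from the separation of the substitutes along $H_K$. The resulting doily has fewer vertices than $K$, a contradiction.

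The hard part will be the explicit sub-case analysis that produces $S$: in both (A) and (B), one must track how the 3-stretches of $u$ and $v$ near each $a_i$ interact under the given beats-relations and disagreement order, to verify that no chord from $u$ or $v$ can meet the chosen $H_K$-arc in its interior. A smaller but similar check handles the degenerate configurations in the tufting step when $a_2$ or $a_3$ fails to lie in $V(C)$ and must be replaced by a near-$a_i$ substitute.
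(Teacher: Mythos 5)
Your construction diverges from the paper's at the crucial point: the paper's proof does \emph{not} try to route the hole through a witness from each of $N_K(u)\sm N_K(v)$ and $N_K(v)\sm N_K(u)$. Instead it assumes only that $N_K(u)\sm N_K(v)$ is not a clique, extracts a single vertex $w\in N_K(u)\sm N_K(v)$ that is non-adjacent to the $W_1(u,v)$-neighbour $u'$ of $u$ (so $w$ has no neighbour in $W_1(u,v)$ except $u$), and then builds a hole $W_1(u,v)\cup Q$ where $Q$ is a short $u$--$v$ path with interior near $a_2$ (on $P_1$ or on $P_3$, according to whether $w\in V(P_1)$). The vertex $w$ then serves as a \emph{tuft} at $u$, alongside $b_1$ and $y_2(v)$. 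Using $w$ as a tuft only requires that $w$ have a single neighbour in the hole, which is a local condition; you instead ask for a full $H_K$-arc $S$ on which neither $u$ nor $v$ has any chord, which is a much stronger global condition.

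That stronger condition is where your argument has a genuine gap, and you flag it yourself as ``the hard part.'' Lemma~\ref{l:mM} only guarantees each major vertex is adjacent to a $3$-stretch near each $a_i$; it places no upper bound on how far the neighbourhoods of $u$ and $v$ extend along $H_K$, so it is not clear that any arc of $H_K\sm\{a_1\}$ joining a $u$-only neighbour to a $v$-only neighbour is free of $u$- and $v$-chords in its interior. Two further steps are also unsupported: (i) the cycle $C=W_1(u,v)\cup Q$ need not be shorter than $H_K$ --- one computes $|V(C)|=2+|W_1(u,v)\cap V(H_K)|+|V(S)|$, and this is $<|V(H_K)|$ only if $S$ omits at least three vertices of $H_K\sm V(W_1(u,v))$, which is again part of the unproven sub-case analysis; and (ii) the tufting step asserts that the $H_K$-neighbour of $w$ (or $z$) outside $S$ has exactly one neighbour in $C$, but the defining property of $S$ constrains only the interior of $S$, not vertices outside it --- since $w$ lies in a $3$-stretch of $u$-neighbours, the $H_K$-neighbour of $w$ outside $S$ may well be adjacent to $u\in V(C)$, and then it is not a tuft. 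So the proposal outlines a genuinely different route but leaves the decisive verification undone, and the claims it does make about $S$ and the substitute tufts are not correct as stated.
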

\begin{proof}
Let $i = 1$ say.
By~\ref{l:1tie}, there is no
  tie for $u$ and $v$ at $1$, $2$ or $3$, and by~\ref{l:dis}, $u$ and $v$ do not
  disagree at $2$ or at $3$.  So, we may assume up to symmetry
  that $v$ beats $u$ at $2$.  Let $u',v'$ be the neighbours of $u,v$
respectively in $W_1(u,v)$. Now at most one vertex in 
$N_K(u) \setminus (N_K(v)\cup \{u'\})$ is adjacent to $u'$; so, since we may 
assume that $N_K(u) \setminus N_K(v)$ is not a clique, it follows that
there exists $w\in N_K(u) \setminus (N_K(v)\cup \{u'\})$ nonadjacent to $u'$. 
Hence $w$ is not in $W_1(u,v)$. By \ref{l:mM}, $v$ is adjacent to 
the neighbour of $v'$ in $K$
not in $W_1(u,v)$; so $w$ is nonadjacent to $v'$, and hence $w$ has no 
neighbour in $W_1(u,v)$ except $u$.

Suppose that $w\notin V(P_1)$. 
If $y_{2}(u)$ is adjacent to $y_{2}(v)$ and hence $v$ is adjacent 
to $y_{2}(u)^+$ by \ref{l:mM}, let $Q$ be the path $u\d y_{2}(u)^+ \d v$, 
and otherwise let $Q$ be the induced
  path from $u$ to $v$ with interior in $y_{2}(u) \d P_1 \d
  a_{2}$.
In either case, since $v$ beats $u$ at $2$, it follows that $y_{2}(v)$ has no 
neighbour in $V(Q)$ except $v$. But then $Q\cup W_1(u,v)$ is a hole and
$b_1,y_2(v), w$ are three tufts for it, contrary to the minimality of $K$.

This proves that $w\in V(P_1)$. But $w, y_2(v)$ are nonadjacent since 
$v$ is adjacent to $y_2(v)^+$. Let $Q'$ be the induced path between $u,v$ with 
interior in $x_{2}(u) \d P_{3} \d a_{2}$. Then $Q'\cup W_1(u, v)$ is a hole and
$b_1,y_2(v), w$ are three tufts for it, contrary to the minimality of $K$.
\end{proof}

We denote the length of a path $P$ by $|P|$.
\begin{lemma}
  \label{l:1jump}
  Let $v$ be a major vertex such that
  $|W_1(v)|$ is minimum and, subject to that, such that
  $|W_{2}(v)|+|W_{3}(v)|$ is minimum.  If $u$ is a major vertex
  non-adjacent to $v$ that has
  neighbours in the interior of $W_1(v)$, then $N_K(u) \setminus N_K(v)$ is a clique,
and $v$ beats $u$ at $2$ and $3$.
\end{lemma}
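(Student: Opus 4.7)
The plan is to pin down the relative position of $u$ and $v$ at index $1$, apply the earlier lemmas to constrain indices $2$ and $3$, use the minimality of $K$ (via a smaller-doily contradiction) to force the beating relations, and finally deduce which of the two sets in \ref{l:dis1} is the clique.

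First I would show that $u$ and $v$ disagree at $1$. If $u$ beats $v$ at $1$, then $x_1(u), y_1(u)$ lie strictly between $x_1(v), y_1(v)$ and $a_1$, giving $|W_1(u)| < |W_1(v)|$ and contradicting the minimality of $|W_1(v)|$. If $v$ beats $u$ at $1$, then every $u$-neighbour on $P_2 \cup P_3$ is at least as far from $a_1$ as the corresponding closest $v$-neighbour, so $u$ has no neighbour in the interior of $W_1(v)$, contradicting the hypothesis. The two tie sub-cases fail in the same way: if $x_1(u)=x_1(v)$ then the assumed interior $u$-neighbour forces $y_1(u)$ strictly closer to $a_1$ than $y_1(v)$ and so $|W_1(u)| < |W_1(v)|$, and symmetrically if $y_1(u)=y_1(v)$. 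Since both the hypothesis and the conclusion are symmetric in the indices $2$ and $3$, I may assume the disagreement order is $x_1(v), x_1(u), y_1(v), y_1(u)$ along $a_3 \d P_2 \d a_1 \d P_3 \d a_2$. Then \ref{l:dis} forbids disagreement at $2$ or $3$, \ref{l:1tie} forbids ties at $2$ or $3$ (since there is disagreement at $1$), and \ref{l:dis1} tells us that one of $N_K(u)\setminus N_K(v)$ and $N_K(v)\setminus N_K(u)$ is a clique.

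Next, to show $v$ beats $u$ at $3$ (the beating at $2$ following by the $2 \leftrightarrow 3$ symmetric argument), I would suppose $u$ beats $v$ at $3$ and derive a smaller doily. Let $t$ be the $u$-neighbour on $P_1$ closest to $x_3(v)$ on the $a_3$-side; by choice of $t$ and the fact that $v$'s closest-to-$a_3$ neighbour on $P_1$ is $x_3(v)$, the path $v \d x_3(v) \d P_1 \d t \d u$ is induced, and combined with $W_1(u,v) = u \d x_1(u) \d P_2 \d a_1 \d P_3 \d y_1(v) \d v$ it forms a hole $H^*$. The three-consecutive-neighbours guarantee of \ref{l:mM} gives $d(a_3, x_1(u)), d(y_1(v), a_2), d(a_2, x_3(v)) \geq 3$, from which $|H^*| \leq |H_K| - 6$. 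I would then exhibit three pairwise non-adjacent tufts for $H^*$: the tuft $b_1$ at $a_1$; a tuft at $v$ taken from $\{x_1(v), x_1(v)^-\}$ depending on whether $x_1(u) = x_1(v)^+$ on $P_2$; and a tuft at $u$ taken from $\{y_1(u), y_1(u)^+\}$ depending on whether $y_1(u) = y_1(v)^+$ on $P_3$. These three are pairwise non-adjacent since $b_1$'s only neighbour is $a_1$ and the $v$- and $u$-tufts lie on the distinct paths $P_2$ and $P_3$ of the induced cycle $H_K$. The resulting doily is smaller than $K$, contradicting its minimality.

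For the clique, $v$ beating $u$ at both $2$ and $3$ places $x_2(v) \in P_3$ and $y_3(v) \in P_2$ in $N_K(v) \setminus N_K(u)$; but $P_2$ and $P_3$ are different paths of the induced cycle $H_K$, so these two vertices are non-adjacent in $G$, hence $N_K(v) \setminus N_K(u)$ is not a clique, and so by \ref{l:dis1} the set $N_K(u) \setminus N_K(v)$ is. The hard part is the construction of the chord-free hole $H^*$: the entry point $t$ on $P_1$ must avoid the chords that $u$'s three consecutive $P_1$-neighbours near $x_3(u)$ (given by \ref{l:mM}) could produce, and the third tuft must be chosen robustly across the boundary subcases $x_1(u) = x_1(v)^+$ and $y_1(u) = y_1(v)^+$; the slack of three units at each end coming from \ref{l:mM} is exactly what makes the counting $|H^*| < |H_K|$ succeed.
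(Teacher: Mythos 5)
Your high-level strategy matches the paper's: both reduce to the case that $u$ and $v$ disagree at $1$, both then apply Lemmas \ref{l:dis}, \ref{l:1tie}, \ref{l:dis1}, and both derive the final contradiction by exhibiting a hole smaller than $H_K$ together with three pairwise non-adjacent tufts. (Your logical order is reversed — you establish the beating first and deduce the clique, the paper assumes the wrong clique to deduce the wrong beating and contradicts — but the two directions are interchangeable once one observes that ``$u$ beats $v$ at $j$'' forces $x_j(u),y_j(u)$ to be two non-adjacent vertices in $N_K(u)\setminus N_K(v)$.) The hole you build is also different in detail: the paper routes the second $u$-$v$ path through $P_1$ near $a_2$ (via $y_2(v)$), you route it near $a_3$ (via $x_3(v)$). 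Either choice can in principle work.

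The genuine gap is exactly where you flag it, and it is not merely ``tricky'': your candidate tufts fail outright in the boundary subcases, and no choice within the sets you name can repair this. Take your orientation $x_1(v),x_1(u),y_1(v),y_1(u)$ along $a_3\d P_2\d a_1\d P_3\d a_2$, and suppose $x_1(u)=x_1(v)^+$. Then $x_1(v)$ is adjacent to $x_1(u)\in V(H^*)$, so it is not a tuft; and by \ref{l:mM} applied to $u$, the vertex $x_1(v)^- = x_1(u)^{--}$ is a neighbour of $u\in V(H^*)$, so it is not a tuft either. The same failure occurs with your $u$-tuft when $y_1(u)=y_1(v)^+$: then $v$ is adjacent to $y_1(v)^+=y_1(u)$ and to $y_1(v)^{++}=y_1(u)^+$ by \ref{l:mM}, killing both candidates. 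In fact the candidate at $v$ fails whenever $x_1(u)\d P_2\d x_1(v)$ has length at most two. The paper's proof devotes its whole middle section to forcing that length to be at least three: it first eliminates the immediately-adjacent case $y_1(u)^+=y_1(v)$ with a six-vertex doily, deduces from the primary optimality of $|W_1(v)|$ that the $P_2$-gap is at least as long as the $P_3$-gap (hence $\ge 2$), and then — crucially — uses the secondary optimality condition on $|W_2(v)|+|W_3(v)|$ to rule out the gap being exactly two (because in that case $|W_1(u)|=|W_1(v)|$ while $u$ beats $v$ at $2$ and $3$, so $u$ would have been a better choice than $v$). Your argument never invokes the secondary condition at all, and without it the length-two case survives and your tufting collapses. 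So this is a missing idea, not a deferred routine check: the minimization of $|W_2(v)|+|W_3(v)|$ in the hypothesis is load-bearing precisely to dispose of the boundary subcases you are hoping to handle ``robustly.''
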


\begin{proof}
  From the minimality of $|W_1(v)|$ and the fact that $u$ has
  neighbours in the interior of $W_1(v)$, we know that $u$ and $v$
  disagree at $1$.  Hence, by \ref{l:1tie}, there is no tie between $u,v$ at $1$, $2$ or $3$.
So, by~\ref{l:dis1}, we may assume that $N_K(v)\setminus N_K(u)$ is a clique.
  It follows that $u$ beats $v$ at $2$ and
  $3$, so $y_{2}(u)$ is not adjacent to $v$.  We may assume that
  $x_1(u)$, $x_1(v)$, $y_1(u)$, $y_1(v)$ appear in
  this order along $a_{3} \d P_{2} \d a_1 \d P_{3} \d
  a_{2}$.

  If $y_{1}(u)^+ = y_{1}(v)$ then $u \d y_{1}(u) \d y_{1}(v) \d u$ is
  a triangle for which $y_{1}(u)^{-}, v, y_{2}(u)$ form a tufting,
  giving a six-vertex doily,  a contradiction.  So, $y_{1}(u)^+
  \neq y_{1}(v)$.  From the minimality of $|W_1(v)|$, it follows that
  $x_{1}(u) \d P_{2} \d x_{1}(v)$ has length at least two.  If it has
  length exactly two, then $|W_1(u)|=|W_1(v)|$ and, since $u$ beats $v$
  at $2$ and $3$, there is a contradiction to the optimality
  of~$v$.  So, the length of $x_{1}(u) \d P_{2} \d x_{1}(v)$ is
  at least three.  Now let $Q$ be an induced path from $u$ to $v$
  whose interior is in $y_{2}(v) \d P_1 \d a_{2}$.  Then $Q
  \cup W_1(u, v)$ is a hole for which $x_{3}(u)$, $x_{1}(v)^{--}$ and
  $b_1$ form a tufting, a contradiction to the minimality
  of~$K$.
\end{proof}

\noindent{\bf Proof of \ref{l:noMajor}.\ \ }
Let $(G,F)$ be a trackable pair.
By \ref{l:mM}, we only need
to prove that no minimum doily of $G$ has frame $F$ and has a major vertex.
To do so, we prove that if some minimum doily has frame 
$F$ and has a major vertex, then there is a star cutset that separates $\{a_1\}$
from $\{a_2, a_3\}$, which is a contradiction to the trackability of
$(G, F)$.  

We assume therefore that there is a minimum doily $K$ of $G$ that has frame $F$ and has a major vertex $v$; 
and let us choose $K,v$ as follows.
\begin{itemize}
\item[{\bf (i)}] Among all such choices of $K,v$, let us choose $K,v$ such that $|W_1(v)|$ is minimum.
\item[{\bf (ii)}] Among all such choices of $K,v$ satisfying condition (i), let us choose $K,v$
such that $|W_{2}(v)|+|W_{3}(v)|$ is minimum.
\item[{\bf (iii)}] Among all choices of $K,v$ satisfying (i) and (ii) above, since  $v$  is not the centre of a star cutset that separates
$\{a_1\}$ from $\{a_2, a_3\}$, there is a path $P= p_1 \c p_k$ disjoint from $K$ such that 
$k\geq 1$, $p_1$ has neighbours in the interior of $W_1(v)$, $p_k$
has neighbours in $V(K) \sm (W_1(v) \cup N_K(v))$, 
and no vertex of $P$ is a neighbour of $v$. Let us choose $K,v$ such that
such a path $P = p_1 \c p_k$ exists with $k$ minimum.
\end{itemize}

The first two conditions will later be referred to as the 
\emph{optimality} of $v$, and the third the
\emph{minimality} of $P$.  We now look for a contradiction.  (This will
prove~\ref{l:noMajor}.)
\\
\\
(1) {\em $P$ is induced, and no vertex of $P \sm p_1$
has a neighbour in the interior of $W_1(v)$, and no vertex of $P \sm p_k$ has a neighbour
in $V(K) \sm (W_1(v) \cup N_K(v))$.}
\\
\\
This is immediate from the minimality of $P$.
\\
\\
(2) {\em $k\ge 2$; and if $p_1$ is major then $N_K(p_1)\setminus N_K(v)$ is a clique and $v$ beats $p_1$ at~$2$ and~$3$.}
\\
\\
The second assertion follows from \ref{l:1jump}, so it remains to prove that $k\ge 2$. 
Suppose that $k = 1$. Then $p_1$ has a neighbour in $V(K) \sm (W_1(v) \cup N(v))$; let 
$Q$ be a minimal subpath of $H_K\setminus a_1$ containing a neighbour of $p_1$ in the interior of 
$W_1(v)$ and a neighbour of $p_1$ in $K \sm (W_1(v) \cup N(v))$. Then $Q$ has at least three internal vertices by \ref{l:mM}. Consequently $p_1$
is major, contrary to the second assertion. This proves (2).
\\
\\
(3) {\em $p_2$ is adjacent to both or neither of $x_1(v), y_1(v)$.}
\\
\\
For suppose it is adjacent to exactly one, say $y_1(v)$. Let $t\in \{y_2(v), x_3(v)\}$.
Now, $W_1(v)$ and $v$ form a hole for
  which $b_1,p_2$ and $t$ are tufts; so $p_2$ is adjacent to $t$, and so $p_2$ is adjacent to both $y_2(v), x_3(v)$. 
In particular, $p_2$ is major.
Since $x_1(v)$ is non-adjacent to $p_2$, and there is a tie for $v,p_2$ at $1$, \ref{l:2ties} implies that 
there is no tie for $v,p_2$ at $2$, and so $v$ is non-adjacent to $y_2(p_2)$. 
  Then $v\d y_1(v)\d p_2\d x_3(v)\d v$ is a hole for
  which $y_1(v)^{-}$, $x_1(v)$ and $y_2(p_2)$ form a tufting
  (note that $y_1(v)^{-}, x_1(v)$ are non-adjacent because $p_1$ has a neighbour in the interior of $W_1(v)$ different from $a_1$) 
contrary to the minimality of $K$. This proves (3).

\bigskip

From the optimality of $v$, not both the paths $a_1 \d P_3 \d y_1(v)$ and  $a_1\d P_2\d x_1(v)$ contain neighbours of $p_1$. Thus we may
assume from the symmetry that $p_1$ has no neighbours in $a_1\d P_2\d x_1(v)$.
It follows that $p_1$ has a neighbour in the interior of $a_1 \d P_3 \d y_1(v)$, and so this path has length at least two.
\\
\\
(4) {\em If $p_2$ is adjacent to both $x_1(v), y_1(v)$, then $p_2$ is non-adjacent to both of $y_2(v), x_3(v)$.}
\\
\\
For suppose that $p_2$ is adjacent to one of
$y_2(v), x_3(v)$, say $t$. Let $t' = y_2(v)^-$ if $t = y_2(v)$, and $t' = x_3(v)^+$ if $t = x_3(v)$.
Then $v\d x_1(v)\d p_2\d t\d v$ is a hole, with a tufting
$x_1(v)^+, t',p_1$, a contradiction. 
This proves (4).
\\
\\
(5) {\em If $p_2$ is adjacent to both $x_1(v), y_1(v)$, then $W_1(v)$ has length three.}
\\
\\
For the hole $v\d y_1(v) \d p_2 \d x_1(v)\d v$
has three tufts $x_1(v)^{+}$, $y_1(v)^{-}$, and $y_2(v)$. Since these do not form a tufting,
we deduce that $x_1(v)^{+}$, $y_1(v)^{-}$ are adjacent,
and so $W_1(v)$ has length three. This proves (5).

\bigskip
Thus, in the case that $p_2$ is adjacent to both $x_1(v), y_1(v)$, 
since $p_1$ has a neighbour in the interior of  $a_1 \d P_3 \d y_1(v)$, it follows
that $x_1(v) = a_1^-$, and $a_1^+$
is adjacent to $y_1(v)$. Note also that in this case $p_2$ is major.
\\
\\
(6) {\em If $p_2$ is adjacent to both $x_1(v), y_1(v)$, then $p_1$ is adjacent to $a_1^+$ and to $y_1(v)$, and $p_1$ is major.}
\\
\\
For from the definition of $p_1$ it follows that $p_1$ is adjacent to $a_1^+$. 
Since $a_1,p_1,y_2(v)$ is not a tufting for the hole $v\d x_1(v)\d p_2\d y_1(v)\d v$,
it follows that $p_1$ is adjacent to $y_1(v)$. Let $u = x_1(v)^{--}$.
Since $a_1, x_3(v),p_1$ is not a tufting for the hole $v\d x_1(v)\d p_2\d u\d v$, it follows that
$p_1$ is adjacent to $u$ and hence $p_1$ is major.  
This proves (6).
\\
\\
(7) {\em If $p_2$ is adjacent to both $x_1(v), y_1(v)$, then $p_2$ is non-adjacent to $y_2(p_1), x_3(p_1)$.}
\\
\\
For suppose that $p_2$ is adjacent to one of $y_2(p_1), x_3(p_1)$, say $t$. Let $t' = y_2(p_1)^-$ if $t = y_2(p_1)$, and $t' = x_3(p_1)^+$ if
$t =  x_3(p_1)$. Then the subgraph induced on $\{p_1,p_2,t,a_1^+,x_1(v), t'\}$ is a six-vertex doily, a contradiction. This proves
(7).
\\
\\
(8) {\em If $p_2$ is adjacent to both $x_1(v), y_1(v)$, then $k\ge 3$, and 
$p_3$ is adjacent to $x_1(v)$ and to $y_1(v)^{+}$.}
\\
\\
For since there is a tie for $v,p_2$ at $1$, and $y_2(v)\in N_K(v)\setminus N_K(p_2)$, \ref{l:1tie} implies that
$N_K(p_2) \subseteq N_K(v)$. Consequently $k\ge 3$.
Now $a_1\d x_1(v)\d p_2\d p_1\d a_1^+\d a_1$ is a hole, and $x_1(v)$ is the only neighbour of $v$ in this hole, and
$a_1$ is the only neighbour of $b_1$ in this hole. Since $v,b_1, p_3$ are pairwise non-adjacent, and every doily has at least nine vertices,
it follows that $p_3$ is adjacent to $x_1(v)$. Since $\{p_1, p_2, y_1(v)^{+}, a_1^+, v,p_3\}$ does not induce a six-vertex
doily, it follows that $p_3$ is adjacent to $y_1(v)^{+}$. This proves (8).
\\
\\
(9) {\em $p_2$ is non-adjacent to $x_1(v), y_1(v)$.}
\\
\\
For otherwise by (3), $p_2$ is adjacent to both $x_1(v), y_1(v)$.
Since $\{p_2,p_3,x_1(v), p_1, x_3(v),a_1\}$ does not induce a six-vertex doily, and $p_1, p_2$ are non-adjacent to $x_3(v)$
by (2) and (4), it follows that $p_3$ is non-adjacent to $x_3(v)$. But then $p_1,a_1, x_3(v)$ form a tufting for the hole
$v\d y_1(v)^+\d p_3\d x_1(v)\d v$, a contradiction. This proves (9).
\\
\\
(10) {\em  $p_1$ is major with respect to $K$.}
\\
\\
  For suppose that $p_1$ is minor.  If $p_1$ has a unique neighbour $r$ in
  $W_1(v)$, then $r$ is in the interior of $W_1(v)$, so $W_1(v)$ and
  $v$ form a hole for which $b_1$, $p_1$ and $x_3(v)$ are non-adjacent
  tufts, a contradiction to the minimality of $K$.

  Suppose that $p_1$ has exactly two neighbours in $W_1(v)$, say $q$ and
  $r$, and they are adjacent.  
We may assume that $a_1$, $q$, $r$ and $y_1(v)$ appear in this
  order along $P_3$ (possibly $r=y_1(v)$).  Let $r'$ = $r^+$ if $r\ne y_1(v)$, and $r'= v$ if $r = y_1(v)$.
By (9), $q^-, r',p_2$ form a tufting for the cycle $q \d r \d
  p_1 \d q$, a contradiction.

  It follows that $p_1$ has two non-adjacent neighbours $q, s$ in
  $W_1(G)$, and $N(v) \cap V(K) \subseteq \{q, r, s\}$ where $r$ is
  the common neighbour of $q, s$ in $K$.  Then, we obtain another minimum doily $K'$ of $G$, still with frame $F$, 
by replacing $r$ by $p_1$ in $K$.  The doily $K'$ together
  with $v$ and $p_2 \d P \d p_k$ contradicts the minimality of $P$.
From \ref{l:mM}, this proves (10).
\\
\\
(11) {\em $p_1$ has two non-adjacent neighbours
  in $W_1(v)$.}
\\
\\
For suppose not.
 Since $p_1$ is major, it follows from \ref{l:mM} that $p_1$ has exactly two neighbours
 in $W_1(v)$, namely $y_1(v)$ and $y_1(v)^-= y_1(p_1)$.  But now by (9)
 the triangle $p_1 \d y_1(p_1) \d y_1(v) \d p_1$ has a tufting
 $y_1(p_1)^-$, $p_2$ and $v$. This proves (11).
\\
\\
(12) {\em $p_2$ is adjacent to all of $x_2(v)$, $y_2(v)$, $x_3(v)$, $y_3(v)$, and in particular $p_2$ is major.}
\\
\\
For by (9), $p_2$ is non-adjacent to both $x_1(v)$ and $y_1(v)$.  
Let $s$ be the neighbour of $p_1$ closest to $y_1(v)$ along
  $W_1(v)$.  So, 
$$p_1 \d s \d W_1(v) \d y_1(v) \d v \d x_1(v) \d W_1(v) \d y_1(p_1) \d p_1$$ 
is a hole for which $b_1$ and $p_2$ are
  non-adjacent tufts.  Since $x_2(v)$, $y_2(v)$, $x_3(v)$, $y_3(v)$
  are tufts at $v$, $p_2$ is adjacent to all of them, for
  otherwise one of them would be a third tuft at $p_1$.  In particular, $p_2$
  is major.  This proves (12).
\\
\\
(13) {\em $p_2$ beats both $v$ and $p_1$ at both $2$ and $3$.}
\\
\\
For suppose there is a tie $z$ for $p_2$ and $v$ at $2$.
Let $z' = z^+$ if $z\in V(P_3)$, and $z' = z^-$ if $z\in V(P_1)$.
  Then
$$p_1 \d p_2 \d z \d v \d x_1(v) \d W_1(v) \d y_1(p_1) \d p_1$$ 
is a hole for which $z'$ and $b_1$ are
  non-adjacent tufts.  By \ref{l:2ties}, there is no tie for $v$ and
  $p_2$ at $3$, so $x_3(p_2)$ is non-adjacent to $v$, and therefore non-adjacent to $p_1$ by the minimality of $P$, it 
follows that $x_3(p_2)$
  is a third tuft, a contradiction.  Hence, there is no tie for $v$ and $p_2$
at $2$, and similarly none at $3$.
  Thus $p_2$ beats $v$ at both $2$ and $3$.  Since  $v$
  beats $p_1$ by (2), it follows that $p_2$ beats $p_1$ at $2$ and $3$.  This proves (13).

\bigskip

Now, to finally obtain a contradiction:
\begin{itemize}
\item If
  $x_1(p_1)$ and $x_1(p_2)$ are distinct and appear in this order
  along $a_3 \d P_2 \d a_1$, then 
$$p_1 \d p_2 \d x_1(p_2) \d W_1(p_2)
  \d y_1(p_1) \d p_1$$ 
is a hole for which $b_1$, $y_1(p_1)^{++}$ and
  $x_3(p_2)$ are non-adjacent tufts, a contradiction. 
\item If $x_1(p_1) = x_1(p_2)$ then
  $p_1 \d p_2 \d x_1(p_1) \d p_1$ is a triangle for which
  $x_1(p_1)^{+}$, $y_1(p_1)$ and $x_3(p_2)$ is a tufting, a
  contradiction.

\item Finally, suppose $x_1(p_2)$
  and $x_1(p_1)$ are distinct and appear in this order along $a_3 \d
  P_2 \d a_1$. From (11) and the optimality of $v$, it follows that $x_1(p_1), x_1(v)$ are non-adjacent.
Let $u$ be the neighbour of $v$ in $W_1(p_1)$ closest to $x_1(p_1)$. By \ref{l:mM}, $u, x_1(v)$ are non-adjacent.
Then by (2), $p_1$ is non-adjacent to $y_2(v)$; by (12), $p_2$ is adjacent to $y_2(v)$; 
and by (13), $p_1$ is non-adjacent to $x_3(p_2)$.
Consequently $x_3(p_2), y_1(p_1), x_1(v)$ is a tufting for the hole
$$p_1\d x_1(p_1)\d W_1(p_1)\d u\d v\d y_2(v)\d p_2\d p_1,$$
a contradiction.
\end{itemize}

This proves~\ref{l:noMajor}.

\section{Shortest path detector}
\label{sec:spd}

Our goal in this section is to prove~\ref{l:detectDoily}.  We need the
following lemma.  

\begin{lemma}\label{nojumps}
Let $K$ be a minimum doily in a graph $G$, and let $F$ be a frame for $K$. 
Suppose that $G$ contains no
  major vertex with respect to $K$. With our usual notation, let $1\le i\le 3$, and let
$s,t\in V(H_K)\setminus a_i$. Let $Q$ be a path in $G\setminus a_i$ between $s,t$,
 and let $P$ be the (unique) path of $K\setminus a_i$ between $s,t$. Then $|Q|\ge |P|$, and if equality holds then no vertex
of the interior of $Q$ belongs to or has a neighbour in $V(H_K)\setminus V(P)$.
\end{lemma}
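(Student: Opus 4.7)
The plan is to prove Lemma~\ref{nojumps} by induction on $|Q|$, after replacing $Q$ by a chordless subpath (so that we may assume $Q$ is induced). The base case $|Q|=1$ is immediate: $s$ and $t$ are adjacent in $G$, hence also in the induced cycle $H_K$, so $|P|=1$ and the interior of $Q$ is empty. For the inductive step, write $L$ for the subpath of $H_K$ from $s$ to $t$ passing through $a_i$, so $V(H_K)=V(P)\cup V(L)$ and $V(P)\cap V(L)=\{s,t\}$.

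First I would dispose of the cases in which some interior vertex $q$ of $Q$ lies on $V(H_K)$. If $q\in V(P)\setminus\{s,t\}$, split $Q$ at $q$ into subpaths $Q_1, Q_2$ of lengths strictly less than $|Q|$ and apply the induction hypothesis to each half; their target $K\setminus a_i$-paths $P_1',P_2'$ are subpaths of $P$ of total length $|P|$, so adding gives $|Q|\geq|P|$. The equality clause transfers because $V(H_K)\setminus V(P)\subseteq V(H_K)\setminus V(P_j')$ for $j=1,2$, and $q$ itself, being an interior vertex of the induced path $P\subseteq H_K$, has both its $H_K$-neighbours in $V(P)$. If instead $q\in V(H_K)\setminus V(P)$ then $q\ne a_i$, so $q$ lies on the induced path $H_K\setminus a_i$ strictly outside $V(P)$; up to symmetry, $s,t,q$ appear in this order along $H_K\setminus a_i$. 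Splitting at $q$, the two corresponding $K\setminus a_i$-paths have lengths $|P|+d$ and $d$, where $d\geq 1$ is the distance from $t$ to $q$ along $H_K\setminus a_i$, so induction yields $|Q|\geq|P|+2d>|P|$ strictly.

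So I may assume the interior of $Q$ is disjoint from $V(H_K)$. Then $C:=L\cup Q$ is automatically a simple cycle, and it is induced precisely when no interior vertex of $Q$ is adjacent to any vertex of $V(H_K)\setminus V(P)$ (the interior of $L$). When $C$ is induced, adding the three tufts $b_1,b_2,b_3$ (whose unique neighbours $a_1,a_2,a_3$ all lie in $L\subseteq C$) produces a doily of $G$ with $|V(K)|+|Q|-|P|$ vertices, so the minimality of $K$ forces $|Q|\geq|P|$, and in the equality case the lemma holds by the sub-case hypothesis.

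The main obstacle is the final sub-case: some interior vertex $q$ of $Q$ is adjacent to some $r\in V(H_K)\setminus V(P)$, so $C$ has a chord. First, $r\neq a_i$, since $a_i$'s three neighbours all lie in $V(K)$ while $q$ does not. Now the decisive hypothesis enters: since $G$ contains no major vertex with respect to $K$, the vertex $q$ is minor, so $N_K(q)$ is confined to a $3$-vertex subpath of some $P_j$. I expect the proof to finish by a careful case analysis on which $P_j$ contains this $3$-vertex subpath, on its position relative to $V(P)$, and on which side of $V(P)$ the vertex $r$ lies; in each case one should be able either to shorten $Q$ to a strictly smaller counterexample (contradicting the induction) or to construct directly a doily of $G$ strictly smaller than $K$ (contradicting minimality) by using the chord $qr$ as a shortcut across $H_K$. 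This final case analysis is where the bulk of the technical effort lies.
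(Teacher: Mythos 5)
Your proposal breaks down at exactly the two places where the real work of this lemma lies, and neither gap is repairable without essentially reinventing the paper's argument.

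First, the easy‐sounding sub‐case is wrong: you claim that when $C=L\cup Q$ is induced, adding $b_1,b_2,b_3$ produces a smaller doily because ``$a_1,a_2,a_3$ all lie in $L$.'' Only $a_i$ is guaranteed to lie in $L$. The endpoints $s,t$ can be anywhere on $V(H_K)\setminus a_i$, so one of $a_{i+1},a_{i-1}$ (or even both) can lie in the interior of $P$ and hence outside $C$, in which case the corresponding $b_j$ has no neighbour on $C$ and is not a tuft. This is not a corner case: it is the dominant case, and the paper's proof spends most of its length on it (the analysis of $x(u),y(u),x(v),y(v)$ and the ``$a_h$'' interior to $R[x(u),y(v)]$, culminating in building a tufting from $s^{++},t^{--},b_1$ and the like). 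Your argument as written proves nothing when some $a_j$ falls into the interior of $P$.

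Second, you explicitly punt on the chord case (``I expect the proof to finish by a careful case analysis''). That is not a proof, and the naive route you sketch --- classify which $P_j$ contains the minor vertex's neighbours and which side $r$ is on --- does not close cleanly. The paper avoids this entirely by using a \emph{lexicographic double induction}: first on $|Q|$, then, for fixed $|Q|$, on $|V(H_K)\setminus V(P)|$. When an interior vertex $q$ of $Q$ is adjacent to some $r\in V(H_K)\setminus V(P)$, one replaces $Q$ by the subpath $t\hbox{-}Q\hbox{-}q\hbox{-}r$; this does not shorten $Q$ but strictly enlarges the target $K\setminus a_i$-path (and so strictly shrinks its complement in $H_K$), so the second inductive parameter drops and one gets $|Q|\ge |Q'|\ge|R[r,t]|>|P|$. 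That second induction parameter is the key idea your proof is missing, and without it there is no bound on the depth of the ``case analysis'' you defer to. Your handling of the cases where an interior vertex of $Q$ actually lies on $V(H_K)$ is fine and matches the paper's step (1), but that is the routine part.
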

\begin{proof}
We proceed by induction on $|Q|$ (for all minimum doilies with frame $F$), and for $|Q|$ fixed, by induction on $|V(H_K)\setminus V(P)|$. 
We may assume that $i = 1$ from the symmetry, and that $Q$ is an induced path from the first inductive hypothesis.
\\
\\
(1) {\em We may assume that no internal vertex of $Q$ belongs to $K$.}
\\
\\
For suppose that some internal vertex $r$ of $Q$ belongs to $K$, and hence to $V(H_K\setminus a_1)$.
Let $P_1$ be the path in $K\setminus a_1$ between $s,r$, and let $Q_1$ be the subpath of $Q$ between $s,r$.
Define $P_2,Q_2$ between $r,t$ similarly. From the first inductive hypothesis, $|Q_j|\ge |P_j|$
for $j = 1,2$; and since 
$$|Q| = |Q_1|+|Q_2| \ge |P_1|+|P_2|\ge |P|,$$
it follows that $|Q|\ge |P|$, and we may assume that equality holds. Hence $|P_j| = |Q_j|$ for $j = 1,2$, and $|P_1|+|P_2| = |P|$.
The latter implies that $r\in V(P)$.
From the first inductive hypothesis, for $j = 1,2$, no internal vertex of $Q_j$ belongs to or has a neighbour in 
$V(H_K)\setminus V(P_j)$, and in particular, no internal vertex of $Q_j$ belongs to or has a neighbour in $V(H_K)\setminus V(P)$. Moreover, $r$
does not belong to $V(H_K)\setminus V(P)$ (since $r\in V(P)$), and $r$ has no neighbour in $V(H_K)\setminus V(P)$, because it has precisely two neighbours
in $V(H_K)$ and they both belongs to $V(P)$ (one is in $V(P_1)$ and the other in $V(P_2)$). Thus in this case the result holds. This proves (1).

\bigskip

If $Q$ has length at most one the result is clear. If it has length two, then since its internal vertex is minor by hypothesis,
again the result holds. We may therefore assume that $Q$ has length at least three.
Let $u,v$ be the neighbours of $s,t$ in $Q$, respectively. Then $u,v\notin V(K)$.
We may assume that $a_1, s,t$ are in clockwise order in $H_K$.
Now $H_K\setminus a_1$ is a path $R$ say, between $a_1^-$ and $a_1^+$.  For all $p,q\in V(R)$, $R[p,q]$ denotes the subpath of $R$
between $p,q$. For each $w\in V(G)\setminus V(K)$ with a neighbour in $V(R)$, let $x(w)$
be the neighbour of $w$ in $V(R)$ that is closest (in $R$) to $a_1^-$, and let $y(w)$ be the neighbour
closest to $a_1^+$. 
Since $w$ is not major, it follows that none of $a_1,a_2,a_3$ belong to the path $R[x(w),y(w)]$.
\\
\\
(2) {\em We may assume that no vertex of the interior of $Q$
has a neighbour in $V(H_K)\setminus V(P)$; and in particular $s = y(u)$ and $t = x(v)$.}
\\
\\
For suppose that some internal vertex $q$ of $Q$ is adjacent to some $r\in V(H_K)\setminus V(P)$. We may assume that
$r\in V(R[a_1^+, s])\setminus\{s\}$ from the symmetry. Let $Q'$ be the path $t\d Q\d q\d r$; then $|Q'|\le |Q|$. But from the second inductive 
hypothesis, $|Q'|\ge |R[r,t]|> |P|$, and so $|Q|>|P|$ as required. This proves (2).

\bigskip

If $|Q|>|P|$ there is nothing to prove, and if $|P| = |Q|$ then the result holds by (2).
Thus we may assume that $|Q|<|P|$, and we need to prove that this is impossible.
In particular, $P$ has at least five vertices, and so $K$ has at least nine; and from the minimality of $K$, it follows that there is
no doily in $G$ with at most eight vertices. Let $T$ be the path of $H_K$ between $s,t$ that passes through $a_1$, and let $H$ be the hole
$Q\cup T$. Thus $|V(H)|<|V(H_K)|$.
\\
\\
(3) {\em The subpaths $R[x(u),y(u)]$ and $R[x(v),y(v)]$ are disjoint, and one of $a_2,a_3$ (say $a_h$) belongs to the interior of $R[x(u),y(v)]$.}
\\
\\
For 
if $a_2,a_3$ both belong to $V(H)$ then $b_1,b_2,b_3$ are three tufts for $H$, and since $|V(H)|<|V(H_K)|$, this contradicts the minimality of $K$.
Thus we may assume that one of $a_2,a_3$, say $a_h$, belongs to the interior of $P$. 
Since $a_h$ does not belong to $R[x(u),y(u)]$ or to
$R[x(v),y(v)]$ (because $u,v$ are minor), it follows that these subpaths are disjoint, and $a_h$ belongs to the interior of $R[x(u),y(v)]$.
This proves (3).

\bigskip

Let $u', v'$ be the neighbours of $u,v$ in the interior of $Q$, respectively.
We recall that, since $a_h$ has degree three, it has no neighbours in the interior of $Q$.
\\
\\
(4) {\em $x(u),y(u)$ are either equal or adjacent, and so are $x(v), y(v)$.}
\\
\\
For suppose that $x(u), y(u)$ are distinct and non-adjacent. Then they have a common neighbour $r$ in $H_K$. Replacing $r$ by $u$ in $K$
gives another minimum doily $K'$ of $G$, also with frame $F$; 
and $Q\setminus s$ is a path between $u,t$. Now every major vertex for $K'$
is also major for $K$, and so there are no major vertices with respect to $K'$. From the first inductive hypothesis, it follows that the length
of $Q\setminus s$ is at least one more than the length of $R[x(u),t]$.
But then it follows that
$|Q|\ge |P|$, a contradiction. This proves (4).
\\
\\
(5) {\em If $x(u)\ne s$ then $s^{++}$ is adjacent to $u'$ and to no other vertex of $H$, and $u, v$ are non-adjacent. 
Similarly if $y(v)\ne t$
then $t^{--}$ is adjacent to $v'$ and to no other vertex of $H$, and $u,v$ are non-adjacent.}
\\
\\
For suppose that $x(u)\ne s$. By (4), $x(u) = s^+$. From the first
inductive hypothesis, $s^+$ has no neighbours in $Q$ except $s,u$; and $s^{++}$ has no neighbours in $Q$ except possibly $u'$.
Now $u'$ is non-adjacent to $s^-$ by (2), and non-adjacent to $s^+$ as we have seen, and non-adjacent to 
$s$ since $Q$ is induced. Since the subgraph induced on $\{u,s,s^+, u',s^-,s^{++}\}$ is not a six-vertex doily, it follows that $u'$
is adjacent to $s^{++}$. Since $a_h$ does not belong to the path
$R[y(v), x(v)]$, it follows that $v$ is non-adjacent to $s^{++}$, and so $v\ne u'$, and therefore
$u,v$ are non-adjacent. This proves the first statement of (5), and the second 
follows from the symmetry.

\bigskip

Now if $x(u) = s$ and $y(v) = t$ then $s^+, t^-,b_1$ form a tufting for $H$, a contradiction. Thus from 
the symmetry, we may assume that $x(u)\ne s$. By (4) and (5), 
$s^{++}$ is adjacent to $u'$ and to no other vertex of $H$, and $u, v$ are non-adjacent. In particular $a_h\ne s^{++}$ 
(since $a_h$ has no neighbours in the interior of $Q$), and so $a_h$
belongs to the interior of $R[s^{++}, y(v)]$. If $y(v) = t$ then $s^{++}, t^-, b_1$ form a tufting for $H$, a contradiction;
so $y(v) \ne t$. By (4) and (5), $y(v) = t^-$ and $t^{--}$ is adjacent to $v'$ and to no other vertex of $H$. Consequently $a_h$ belongs to the 
interior of $R[s^{++}, t^{--}]$, and in particular $s^{++}, t^{--}$ are non-adjacent. Since $a_h$ does
not belong to the path $R[x(u'), y(u')]$, it
follows that $u'\ne v'$.
We deduce that $s^{++}, t^{--}, b_1$ form a tufting for $H$,
a contradiction. This proves \ref{nojumps}.
\end{proof}

\begin{lemma}
  \label{l:replace}
  Let $G$ be a graph, and let $F$ be a frame for a minimum doily $K$ of $G$,
with the usual notation.
  Suppose that $G$ contains no
  major vertex with respect to $K$.  Let $i\in \{1, 2, 3\}$, and let $Q_i$ be a shortest path
  from $a_{i-1}$ to $a_{i+1}$ in $G\setminus a_i$.  The graph obtained from $K$ by replacing $P_i$ by $Q_i$
  is a minimum doily of $G$, and has frame $F$, and no vertex is major with respect to it.
\end{lemma}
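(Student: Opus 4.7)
My plan is to apply Lemma~\ref{nojumps} to pin down the structure of $Q_i$, then verify in turn that $K'$ (the graph obtained from $K$ by replacing $P_i$ by $Q_i$) is a chordless cycle with three tufts, that $F$ is still a frame for $K'$, and finally (the main step) that no vertex of $G$ is major with respect to $K'$.

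First I would apply Lemma~\ref{nojumps} with $Q=Q_i$, $P=P_i$, $s=a_{i-1}$, $t=a_{i+1}$. Since $P_i$ is itself a path in $G\setminus a_i$ between $a_{i-1}$ and $a_{i+1}$, the shortest-path property of $Q_i$ gives $|Q_i|\le|P_i|$, so Lemma~\ref{nojumps} yields $|Q_i|=|P_i|$ and no internal vertex of $Q_i$ lies in or has a neighbour in $V(H_K)\setminus V(P_i)$. Thus $H_{K'}=P_{i-1}\cup Q_i\cup P_{i+1}$ and $|V(K')|=|V(K)|$. The cycle $H_{K'}$ is chordless: $Q_i$ is induced as a shortest path, $P_{i\pm 1}$ are chordless in $H_K$, every potential chord from an internal vertex of $Q_i$ to $V(H_K)\setminus V(P_i)$ is forbidden by Lemma~\ref{nojumps}, and any other potential chord of $H_{K'}$ would be a chord of $H_K$. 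The tufts $b_1,b_2,b_3$ still have degree one in $G$ with unique neighbours $a_1,a_2,a_3\in V(H_{K'})$, so $K'$ is a doily; since $|V(K')|=|V(K)|$, it is minimum.

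Next I would verify that $F$ is a frame for $K'$. For each $j\ne i$ the path $P_j$ of $K'$ coincides with that of $K$, so the only non-automatic requirement is $a'_{i-1},a''_{i+1}\in V(Q_i)$. Write $Q_i=a_{i-1}v_1\cdots v_{k-1}a_{i+1}$ with $k\ge 1$ (if $k=1$ then $Q_i=P_i$ and the result is trivial). The vertex $v_1$ is a neighbour of $a_{i-1}$ in $G$, so $v_1\in\{b_{i-1},a'_{i-1},a''_{i-1}\}$; the vertex $b_{i-1}$ is excluded by degree, and $a''_{i-1}\in V(P_{i+1})\setminus\{a_{i-1}\}\subseteq V(H_K)\setminus V(P_i)$ is forbidden as an internal vertex of $Q_i$ by Lemma~\ref{nojumps}. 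Hence $v_1=a'_{i-1}$, and symmetrically $v_{k-1}=a''_{i+1}$.

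Finally, to show no vertex is major with respect to $K'$, suppose some $v\in V(G)\setminus V(K')$ has neighbours in each of $P_{i-1}$, $P_{i+1}$, $Q_i$. If $v\in V(K)\setminus V(K')$, then $v$ is an internal vertex of $P_i$, and by chord-freeness of $H_K$ any neighbour of $v$ in $V(P_{i-1})\cup V(P_{i+1})$ must be a shared junction $a_{i-1}$ or $a_{i+1}$; this forces $v\in\{a'_{i-1},a''_{i+1}\}\subseteq V(Q_i)$, a contradiction. Otherwise $v\in V(G)\setminus V(K)$, and Lemma~\ref{l:mM} says $v$ is minor or major with respect to $K$. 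Major contradicts the hypothesis. If $v$ has a neighbour in $V(P_i)$, then combined with its neighbours in $P_{i-1}$ and $P_{i+1}$ this makes $v$ major with respect to $K$, contradiction. Otherwise, $v$ being minor forces $N_K(v)$ into a subpath of length at most two in a single $P_j$; but since $V(P_{i-1})\cap V(P_{i+1})=\{a_i\}$, the neighbours of $v$ in $P_{i-1}$ and $P_{i+1}$ can both sit inside one $P_j$ only if $v$ is adjacent to $a_i$, whereupon $v\in N(a_i)=\{b_i,a'_i,a''_i\}\subseteq V(K)$, contradiction. The delicate point in this last step is that $v$'s neighbour in $V(Q_i)$ could be a brand-new vertex in $V(Q_i)\setminus V(H_K)$; what saves us is that Lemma~\ref{l:mM} is invoked for $K$ rather than $K'$, so the contradiction is extracted purely from $v$'s neighbours in the unchanged paths $P_{i-1}$ and $P_{i+1}$.
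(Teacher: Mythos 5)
Your proof is correct and follows essentially the same route as the paper's: apply Lemma~\ref{nojumps} to conclude $|Q_i|=|P_i|$ and that no internal vertex of $Q_i$ lies in or attaches to $V(H_K)\setminus V(P_i)$, from which $K'$ is a minimum doily with frame $F$, and then rule out major vertices for $K'$. Your case analysis for the last step (splitting on whether $v\in V(K)\setminus V(K')$ or $v\notin V(K)$, and invoking Lemma~\ref{l:mM} in the latter case) spells out in detail what the paper compresses into the single remark that no vertex other than $a_i$ has neighbours in the interiors of both $P_{i-1}$ and $P_{i+1}$.
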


\begin{proof}
From the choice of $Q_i$ it follows that $|Q_i|\le |P_i|$; and so from \ref{nojumps},
equality holds, and no vertex
of the interior of $Q_i$ belongs to or has a neighbour in $V(H_K)\setminus V(P_i)$.
Consequently the graph obtained from $K$ by replacing $P_i$ by $Q_i$
  is a minimum doily of $G$, say $K'$, and has frame $F$. Since no vertex has neighbours in the interiors of both $P_{i-1}, P_{i+1}$ except $a_i$,
it follows that no vertex is major with respect to $K'$.
\end{proof} 

\bigskip

\noindent{\bf Proof of~\ref{l:detectDoily}.\ \ }
  Suppose that we are given a pair $(G, F)$ where 
$$F = (b_1, b_2, b_3,
a_1, a_2, a_3, a'_1, a_2', a_3', a_1'', a_2'', a_3'')$$
  is a frame in $G$.  Here is an algorithm:

\begin{itemize}
\item For $i=1, 2, 3$, compute a shortest path $Q_i$ from $a'_{i-1}$
  to $a''_{i+1}$ (if such a path does not exist, let $Q_i$ be the null graph).
\item Output the subgraph of $G$ induced on $a_1$, $a_2$, $a_3$, $b_1$, $b_2$, $b_3$ and the
  vertices of $Q_1$, $Q_2$ and $Q_3$.
\end{itemize}

This algorithm obviously outputs an induced subgraph $K$ of $G$.  It
remains to prove that if $(G, F)$ is a trackable pair and $G$ contains
a doily, then $K$ is a doily.  Assume therefore that $(G, F)$ is a trackable pair and $G$ contains
a doily. Consequently, there is a minimum doily $K'$ of $G$ such that $F$ is a frame for $K'$.
By~\ref{l:noMajor}, 
$G$ contains no major vertex with respect to $K'$. By
applying Lemma~\ref{l:replace} three times, we see that $K$ is a doily.


\begin{thebibliography}{1}  
\bibitem{chudnovsky.c.l.s.v:reco}
M.~Chudnovsky, G.~Cornu{\'e}jols, X.~Liu, P.~Seymour, and K.~Vu{\v s}kovi{\'c},
``Recognizing {B}erge graphs'',
{\em Combinatorica}, 25 (2005), 143--186.

\bibitem{chudnovsky.seymour:theta}
M.~Chudnovsky and P.D. Seymour,
``The three-in-a-tree problem'',
{\em Combinatorica}, 30 (2010), 387--417.

\bibitem{leveque.lmt:detect}
B.~L{\'e}v{\^e}que, D.~Lin, F.~Maffray, and N.~Trotignon,
``Detecting induced subgraphs'',
{\em Discrete Applied Mathematics}, 157 (2009), 3540--3551.

\bibitem{RS13}
N.~Robertson and P.D. Seymour, ``
``Graph Minors. XIII. The disjoint paths problem'',
{\em J. Combinatorial Theory, Ser. B}, 63 (1995), 65--110.
\end{thebibliography}
\end{document}